\newcounter{alg}
\newtheorem{theorem}{Theorem}
\newtheorem{proposition}[theorem]{Proposition}
\newtheorem{corollary}[theorem]{Corollary}
\newtheorem{lemma}[theorem]{Lemma}
\newtheorem{definition}{Definition}
\def\squarebox#1{\hbox to #1{\hfill\vbox to #1{\vfill}}}
\def\eod{\vrule height 6pt width 5pt depth 0pt}
\newenvironment{proof}{\noindent {\bf Proof:} \hspace{.677em}}{\hspace*{\fill}{\eod}}
\newcommand{\pr}[1]{\ensuremath{\text{{\bf Pr}$\left[#1\right]$}}}
\newcommand{\E}[1]{\ensuremath{\text{{\bf E}$\left[#1\right]$}}}
\newcounter{nalg}[section] 
\renewcommand{\thenalg}{\thesection .\arabic{nalg}} 
\newcommand{\routing}{$\textsc{Alg-Routing}$\xspace}
\newcommand{\sampling}{$\textsc{Alg-Sampling}$\xspace}
\newcommand{\maintainer}{$\textsc{Alg-LDS}$\xspace}
\newcommand{\random}{$\textsc{Alg-Random}$\xspace}
\newcommand{\join}{$\textsc{Join}$\xspace}
\begin{document}

\title{Always be Two Steps Ahead of Your Enemy\footnote{A preliminary version appeared in the proceedings of the 33rd IEEE International Parallel and Distributed Processing Symposium (IPDPS), May 2019.}\newline 
\large{Maintaining a routable overlay under massive churn in networks with an almost up-to-date adversary}}

\author{Thorsten G{\"o}tte\thanks{Department of Computer Science, Paderborn University, Germany, E-mail:thgoette@mail.upb.de. This work is partially supported by the German research council (DFG) in the context of the collaborative research center "\emph{On-the-Fly Computing}" (SFB 901)}\and
Vipin Ravindran Vijayalakshmi\thanks{Chair of Management Science, RWTH Aachen, Germany. E-mail:vipin.rv@oms.rwth-aachen.de. This work is partially supported by the German research council (DFG) Research Training Group 2236 UnRAVeL.}\and
Christian Scheideler\thanks{Department of Computer Science, Paderborn University, Germany, E-mail:scheidel@mail.upb.de}}

\date{}
\maketitle

\begin{abstract}

We investigate the maintenance of overlay networks under massive churn, where an adversary may churn a constant fraction $\alpha n$ of nodes over the course of $O(\log n)$ rounds. 
In particular, the adversary has an almost up-to-date information 
of the network topology as it can observe only a slightly 
outdated topology that is at least $2$ rounds old with a provably minimal restriction that new nodes can only join 
the network via nodes that have taken part in the network 
for at least one round. We show that it is impossible to maintain a connected topology
if adversary has up-to-date information about the nodes' connections.
As our main result we present an algorithm that constructs a new overlay- completely independent of all previous overlays - every $2$ rounds. 
Furthermore, each node sends and receives only $O(\log^3 n)$ messages each round. 
As part of our solution we propose the Linearized DeBruijn Swarm (LDS), 
a highly churn resistant overlay, which will be maintained by the algorithm. 
However, our approaches can be transferred to a variety of classical P2P topologies where nodes are mapped into the $[0,1)$-interval. 
\end{abstract}


\section{Introduction}
Peer-to-peer (P2P) networking has proven to be a useful technique to construct resilient decentralized systems. In a P2P architecture the nodes are connected via the Internet and form a logical network topology,
also known as an overlay network. 
Within the overlay each node has a logical address and logical links that allows it to search and store information in the network.

A key requirement for all applications that rely on P2P networks is reliable
communication between all nodes, i.e., each node should be able to send a message to another node at all times.
This is complicated by the fact that in every large-scale system, errors and attacks are the rule rather than the exception.
At the same time there is usually no or only little admission control for new participants.
This implies a massive amount of churn, i.e., nodes joining and leaving the network at any given time.  
In fact, empirical studies have shown that $50\%$ of all nodes are subjected to churn over the course of an hour~\cite{Stutzbach}.
This alludes for robust and distributed protocols that maintain connected overlays in spite of heavy churn.

In this work we deal with the problem of maintaining a \emph{routable} overlay under adversarial churn. 
We define an overlay as routable, if each node in every round is able to send a message to a given logical address $p \in [0,1)$. In each round the adversary picks a set of nodes that leave the network and proposes a set nodes that join the network.

It is easy to see that an adversary that knows all connections between the nodes can simply partition the network by churning out the neighborhood of a node. 
Previous literature, e.g.,~\cite{AugustineS18,AugustineP0RU15,DreesGS16}, therefore considered models with an additional restriction, where the adversary has slightly outdated information about the nodes' connections. 
In particular, the adversary could access \emph{all} information that is at least $O(\log \log n)$ rounds old, where $n \in \mathbb{Z}^+$ is the minimal number of nodes in the network in each round. 
This includes the nodes' connections, internal
states, random decisions, the contents of all messages, etc.
Within these $O(\log \log n)$ rounds the nodes execute a distributed algorithm that completely rearranges the network topology.
However, we remark that the techniques presented in~\cite{AugustineS18,AugustineP0RU15,DreesGS16} cannot be used 
if one wants to grant the adversary access to even more recent information.

To overcome above mentioned restriction, we propose a trade-off in the form of a $(a,b)$-late omniscient adversary that has an almost up-to-date 
information about the network topology, but is more outdated with regard to all other aspects. 
In particular, it has full knowledge of the topology after $a$ rounds and complete knowledge of messages, internal states, etc. after $b$ rounds.
In the real world, an adversary with similar properties could, e.g., be an agency eavesdropping on an Internet exchange points.
They can see \emph{who} communicates based on the involved IP-addresses but are unable to decrypt the messages (or take longer to decrypt them).  

Our main contribution is a distributed overlay maintenance algorithm that completely rearranges the network every $2$ rounds and can therefore  handle a $(2,O(\log n))$-late adversary. Furthermore, the algorithm allows for routing a message to a logical address $p \in [0,1)$ within $O(\log n)$ rounds. The algorithms are randomized and the results hold with high probability~(\emph{w.h.p.})\footnote{Throughout this paper \emph{w.h.p.} means with probability $\left(1-\frac{1}{n^k}\right)$, where $n$ is the number of nodes and $k$ is a tunable constant}.
The overlay we consider in this work is an extension of the Linearized DeBruijn Graph presented in Richa et al.~\cite{DBLP:conf/sss/RichaSS11} (which by itself is based on the DeBruijn Graph and draws ideas from Naor and Wieder~\cite{DBLP:conf/spaa/NaorW03}) that uses quorums of logarithmic size to send and receive messages. The latter is adapted from Fiat et al.~\cite{FiatSY05}, where the authors use this approach for the Chord-Overlay.
We further present a robust algorithm that minimizes the number of messages sent in every step. 
Our approach uses several structural properties of the overlay as well as a careful analysis
of non-independent events to ensure fast reconfiguration of the network.

\subsection{Model}

We assume that time proceeds in synchronous\footnote{Synchronicity is a standard assumpution in the related work as nodes need to react to the adversary's changes in a timely manner.} rounds and observe a dynamic set of nodes $\mathcal{V} := \big(V_0, V_1, \dots \big)$ such that $V_t$ is the set of nodes in round $t$. 
Each node is identified by a unique and immutable identifier denoted by $ID$. A node $u \in V_t$ can send a message to a node $v \in V_t$ only if it knows the $ID$ of node $v$. 
In a real-world network these $ID$s could, e.g., be the nodes' IP addresses.
This results in series of graphs $\mathcal{G} := \big(G_0, G_1, \dots \big)$ with $G_i = (V_i, E_i)$ and $E_t := \{(u,v) \mid u \text{ sends a message to } v \text{ in round } t\}$. 
Observe that each $G_i$ is a directed graph.
Creating an edge may be compared to sending a UDP (User Datagram Protocol) message to the desired receiver or establishing a TCP connection.
We assume that a node can create edges to $O(\log{n})$ different nodes in each round and can send $O(\text{polylog}\,{n})$ bits via each edge.
Note that throughout this paper we assume that an $ID$ is of size $O(\log{n})$. 

Our model assumes that the set $\mathcal{V}$ is determined by an adversary. 
This implies, in every round $t$ the adversary can propose a set $O_t \subset V_{t-1}$ that leaves the network 
and a set $J_t \subset V_t$ that joins the network, i.e., $V_t := V_{t-1} \setminus O_t \cup J_t$.
In particular, the adversary has to comply to the following rules.
\begin{enumerate}
\item \textbf{Lateness}.

As mentioned in the introduction, we consider the adversary to be $\left(2, O(\log n)\right)$-late omniscient, i.e., the adversary has slightly outdated knowledge of the topology, i.e., the series of graphs $\mathcal{G} := (G_0, G_1, \dots)$ created through the communication between nodes.
In particular, since our adversary is $2$-late in round $t$ the adversary has full knowledge of all
graphs until $G_{t-2}$.
Further, it has \emph{no} knowledge of the nodes' internal states and the contents of messages for $O(\log n)$ rounds, i.e., the adversary learns the content of message sent in round $t$ only in round $t+O(\log n)$.
\item \textbf{Churn Rate}.

For all $V_t\in  \mathcal{V}$ it holds that $|V_t| \in [n,\kappa n]$, where $\kappa \ge 1$ is a small constant. In other words, the number of nodes stays within $\Theta(n)$. 
For a suitable value $T \in O(\log n)$, we assume that $V_{t+T} \cap V_{t} \geq (1-\alpha) n$, where $\alpha \in [0,1)$ is a fixed constant. 
This allows the churn to be $O(n)$ in each round as long as there is a \emph{stable} set of size $\Theta(n)$ that remains in the network for at least $T$ rounds.
\item \textbf{Bootstrap Phase}.

We assume that until a round $B \in O(\log^2 n)$ the adversary is inactive and no churn occurs, also known as the \emph{bootstrap} phase. 
We would like to remark that several other works also assume a bootstrap phase to prepare the random sampling (cf. \cite{AugustineS18,AugustineP0RU15,DreesGS16}).
Only after the conclusion of the bootstrap phase, is the adversary allowed to begin churning nodes in or out of the network.
\item \textbf{Restricted Join}.

Additionally, we assume that a new node $v \in V_t\setminus V_{t-1}$ can only join the network via a node $w \in V_t \cap V_{t-2}$, i.e., the node $v$ joins \emph{via} node $w$
in round $t$. In Section~\ref{sec:impossible} we show that for our purposes this is a necessary condition.
Finally, the number of nodes that join the network via the same node $v \in V_t$ is a constant  $\phi \in O(1)$ \footnote{In principle our algorithm in Section \ref{sec:maintainer} could be extended to tolerate $O(polylog \, n)$ joins per node as in \cite{AugustineS18}. We chose a constant because a higher number of joins would only increase the number of messages by a poly-logarithmic factor and does not introduce further algorithmic challenges.}.
\end{enumerate}

We remark that our model incorporates observations from Stutzbach and Reza~\cite{Stutzbach}
that new nodes join and leave very frequently but there is a (relatively) stable set of older nodes. Note that to the best of our knowledge this is a significantly more \emph{flexible} model compared to other related work.
Given the above mentioned constraints, a round $t$ consists of the following four steps.
\begin{enumerate}
    \item At the beginning of each round the adversary can select a set of nodes $O_t \subset V_{t-1}$ that leave the network in round $t$. These nodes do not receive any messages and leave the network \emph{immediately}.
Further, the adversary may propose a set of nodes $J_t$ that joins the network in round $t$. For each node $v \in J_t$ the adversary selects a \emph{bootstrap} node $w \in V_{t} \setminus J_t$ (satisfying the necessary conditions for Restricted Join) that receives a reference to $v$. 
    \item Next, all nodes that are still in the system receive all messages sent in the previous round.
    Note that this even holds for messages that were sent by nodes that were churned out in the current round.
    In other words, if a node can send out a message, it will be received.
    \item After receiving all messages, a node can perform calculations on its local variables and the received messages.
    \item Finally, each node may send messages to other nodes.
    Recall that sending a message to another node implicitly creates an edge in the graph $G_{t+1}$. Every message sent in round $t$ is received in the round $t+1$. Furthermore, due to the lateness condition these edges can only be \emph{seen} by the adversary at the beginning of the round $t+3$.
\end{enumerate}

\subsection{Related Work}

\begin{table}[!t]
\centering
\renewcommand{\arraystretch}{1.3}
\begin{threeparttable}[b]
\caption{Overview of different models in the literature}
\label{table_example}
\centering
\begin{tabular}{@{}cccc@{}}
\hline
\bfseries Paper & \bfseries Lateness\tnote{a} & \bfseries Churn Rate\tnote{b} & \bfseries Immediate\\
\hline
\cite{AugustineS18} & $\left(O(\log \log n),O(\log \log n)\right)$ & $(\alpha n, O(\log \log n))$ & Yes\\
\cite{DreesGS16} & $\left(O(\log \log n),O(\log \log n)\right)$ & $(n - \frac{n}{\log n}, O(\log \log n))$ & No\tnote{c}\\
\cite{AugustineMMPRU13} & $(O(\log n),O(\log n))$ &$\left(O\left(\frac{n}{\log n}\right), O(\log n)\right)$ & Yes\\
Our work & $(2, O(\log n))$ & $(\alpha n, O(\log n))$ & Yes\\
\hline
\end{tabular}
\begin{tablenotes}
\item [a] An adversary is $(a,b)$-late if it has full knowledge of the topology after $a$ rounds
and complete knowledge after $b$ rounds.
\item [b] The churn rate is $(C,T)$ if the adversary can perform $C$ join/leaves in $T$ rounds.
\item [c] Nodes remain in the network for additional $O(\log\log n)$ rounds.
\end{tablenotes}
\end{threeparttable}
\end{table}

There has been extensive work on analyzing overlay networks under high adversarial churn.
As already mentioned in the introduction, these works had a variety of different model assumptions. See Augustine and Sivasubramaniam~\cite{AugustineS18} for a comprehensive survey on previous results. In the following, we only concentrate on models closely related to ours.

First, there was a series of papers (cf. \cite{Scheideler05,FiatSY05,AwerbuchS07}) that assumed only a subset of nodes is subjected to \emph{adversarial} churn. 
However, these nodes could also act byzantine 
and try to sabotage the overlay's maintenance and the routing by sending corrupted messages.
A general assumption was that up to a constant fraction of nodes would be malicious.
Scheideler~\cite{Scheideler05} present a protocol that spreads these nodes
over the network such that each connected subset of logarithmic size contains a constant 
fraction of non-byzantine nodes. 
Fiat et al. \cite{FiatSY05} build upon this work and present a full overlay maintenance
algorithm that provided a robust Distributed Hash Table. In their approach, each virtual address $p \in [0,1)$
is maintained by a committee of $O(\log n)$ nodes. We will reuse this idea in our work.

In more recent works \emph{all} of the nodes are subjected to adversarial churn and not only a fixed set.
However, these works usually do not consider byzantine behavior.
The adversary in these papers can be described by three properties: The \emph{lateness}, the \emph{churn rate}, and 
if it is \emph{immediate}. We say adversary is $(a,b)$-late if it has full knowledge of the topology after $a$ rounds and complete knowledge of all sent messages, internal states, etc. after $b$ rounds.
The churn rate is $(C,T)$ if the adversary can perform $C$ join/leaves in $T$ rounds.
Last, an adversary is immediate if churned out nodes have to leave the network immediately
and without the possibility to send and receive more messages.
Table \ref{table_example} shows an overview over the different models.
Note that the table is only for comparison as it simplifies some of the models and does not depict all of their respective nuances. However, these simplifications do \emph{not} weaken the adversary.

Augustine et al.~\cite{AugustineP0RU15} present an algorithm that builds and maintains an overlay in the presence of a nearly completely oblivious adversary. Here, the overlay no longer has a fixed structure but is an unstructured expander graph of constant degree. Note that this overlay has no 
virtual addressing. 
However, in \cite{AugustineMMPRU13} the authors present a scheme that allows 
to quickly search for data in these networks.

Further, Drees et al.~\cite{DreesGS16} build a structured expander, a so-called $H_d$-Graph, which is the union of $d$ random rings. Their adversary is not only $O(\log\log n)$-late with regard to communication, it also has access to all nodes' memory and all sent messages after $O(\log\log n)$. Nodes that are churned out in round $t$ may remain in the network until some round $T \in O(t+\log\log n)$. Thus, it is not immediate.

Last, the \textsc{SPARTAN} framework presented in \cite{AugustineS18} probably bears the greatest resemblance with our work.
In \textsc{SPARTAN} the nodes maintain a logical overlay resembling a butterfly network.
To ensure robustness each of the butterfly's virtual nodes is simulated by $O(\log n)$ nodes.
The key difference between our work and \textsc{SPARTAN} is the adversary's lateness. Similar to \cite{DreesGS16}, the \textsc{SPARTAN} framework assumes the adversary to be ($O(\log\log n)$,$O(\log\log n)$)-late, but in return allows the churn to be as high as $\alpha n$ in $O(\log\log n)$ rounds. However, unlike \cite{DreesGS16}, \textsc{SPARTAN} allows the adversary to be immediate.

\subsection{Our Contribution}
In this work we present an algorithm which given a dynamic set of nodes $\mathcal{V} := \big(V_0, V_1, \dots \big)$ chosen by a $(2,O(\log n))$-late adversary, creates a series of graphs $\mathcal{G} := \big(G_0, G_1, \dots \big)$ with $G_i := \left(V_i,E_i\right)$, such that it holds \emph{w.h.p.} that $G_i$ is \emph{routable}, i.e., each node can send a message to a logical address $p \in [0,1)$.

The paper is organized as follows.
\begin{itemize}
\item In Section \ref{sec:debrun} we introduce the Linearized DeBruijn Swarm (LDS). This graph topology is based on the linearized DeBruijn Graph presented by Richa et al. \cite{DBLP:conf/sss/RichaSS11} and the concept of swarms used by Fiat et al.~\cite{FiatSY05} for the Chord overlay network. 
\item In Section \ref{sec:impossible} we show that our model assumptions are necessary 
in order the solve the problem. In particular, we show that any adversary can partition 
a network where nodes can join via nodes that themselves just joined one round ago.
Further, we prove that our model requires the adversary to be at least $1$-late with regard to the topology.
\item In Section \ref{sec:routing} we present a routing algorithm for the LDS, which
optimizes the congestion if we want guaranteed message delivery. In this section, we also define when a dynamic overlay is routable. 
\item Section \ref{sec:maintainer} we present our main contribution, an algorithm that rearranges the graph topology such that it is completely rebuilt every $2$ rounds but still allows routing.
The message complexity is $O(\log^3{n})$ messages per node and round \emph{w.h.p}\footnote{Note that we do \emph{not} seek to optimize message complexity}.
\end{itemize}

\subsection{Definitions and Preliminaries}

In this section we present some definitions and results from
probability theory that we will use in the analysis of our algorithms.
During our analysis we deal with both dependent and independent random variables.
The two following general classes of random variables will prove to be useful.
First, there is Negative Correlation:
\begin{definition}[Negative Correlation, see e.g. {\cite[p. 31]{ScheidelerHabil}}]
A set of random variables $X_1, X_2, \dots, X_n$ are said to negatively correlated if any subset $X_S \subseteq X$ it holds that,
\[
	\E{\prod_{i \in S} X_i} \leq \prod_{i \in S} \E{X_i}.
\]
Here, $S \subseteq [1,n]$ is the set of indices in $X_S$.
\end{definition}
Further, there is the slightly stronger notion of Negative Association:
\begin{definition}[Negative Association~\cite{joag-dev1983, Wajc2017NegativeA}]
A set of random variables $X_1, X_2, \dots, X_n$ are said to negatively associated (NA) if for any two functions $f,g$ both monotonically increasing (or both monotonically decreasing) defined on disjoint subsets of
$X$ it holds that,
\[
	\E{f(X) \cdot g(X)} \leq \E{f(X)}\cdot \E{g(X)}.
\]
\end{definition}
Note that all independent and  hyper-geometric random variables are always NA~\cite{dubhashi1998balls}.
\begin{corollary}[NA implies Negative Correlation~\cite{Wajc2017NegativeA}]
Let $X_1, X_2,\dots, X_n$ be a set of NA random variables. Then for all $X_S \subset X$ such that $i\neq j$ it holds that, \[
	\E{\prod_{i \in S} X_i} \leq \prod_{i \in S} \E{X_i}.
\]
\end{corollary}

The following propositions from Joag-Dev and Proschan~\cite{joag-dev1983} will be extensively used in many of our proofs.
\begin{proposition}[\cite{joag-dev1983,dubhashi1998balls}]\label{NA_prop1}
If $X := (X_1, \dots, X_n)$ and $Y := (Y_1, \dots, Y_m)$ are negatively associated sets of random variables that are mutually independent, then the vector $(X, Y) := (X_1, \dots, X_n, Y_1, \dots, Y_n)$ are also negatively associated.
\end{proposition}

\begin{proposition}[\cite{joag-dev1983,dubhashi1998balls}]\label{NA_prop2}
Let $(X_1, \dots, X_n)$ be negatively associated random variables. For some $k \le n$, let $I_1, \dots, I_k \subseteq [n]$ be disjoint index sets. For $j \in [k]$, let $f_j : \mathbb{R}^{|I_k|} \mapsto \mathbb{R}$ be functions that are all non-decreasing or all non-increasing. Define $Y_j := f_j(X_i: i\in I_j)$. Then the random variables $(Y_1, \dots, Y_k)$ are negatively associated.
\end{proposition}
Further, we make use of the following Chernoff Bounds, which are defined as follows.
\begin{lemma}[Chernoff-Hoeffding Bounds~\cite{dubhashi1998balls, MU05}]\label{chernoff_bound}
Let $X := \sum X_i$ be the sum of negatively correlated random variables with $X_i \in \{0,1\}$.Then it holds that for any $0<\delta<1$,
\[
	\pr{X \geq (1+\delta)\E{X}} \leq e^{-\frac{\delta^2\E{X}}{2}}
\] and  \[ \pr{X \leq (1-\delta)\E{X}} \leq e^{-\frac{\delta^2\E{X}}{3}}.
\]
and for any $\delta \ge 1$, it holds:
\[ \pr{X \geq (1+\delta)\E{X}} \leq e^{-\frac{\delta\E{X}}{3}}.
\]

\end{lemma}
Throughout this work we assume that each node in the network is aware of $n$ and $\kappa$, i.e., the lower and upper bound on the number of nodes currently in the network. 
We make this simplification due to Stutzbach and Reza~\cite{Stutzbach}, that the number of nodes stays \emph{relatively} stable. 
Furthermore, in order to simplify notations, we define $\lambda := 2\log{\kappa n}$ \footnote{For convenience we assume throughout this work that $\lambda$ is an integer}.
We would like to remark that all of our algorithms presented in this manuscript may be adapted to work with 
close estimates of $\lambda$ and $\frac{\lambda}{n}$ using approaches presented in \cite{DBLP:conf/sss/RichaSS11,FiatSY05,KingS04,KingLSY07}.
\subsubsection*{DeBruijn Swarm}
\label{sec:debrun}

\begin{figure}
\centering
\begin{tikzpicture}

\def \n {5}
\def \radius {1.5cm}
\def \margin {8} 

\foreach \s in {1,...,\n}
{
  \node[fill, circle,scale=0.3] at ({360/\n * (\s - 1)}:\radius) {};
  \draw[-] ({360/\n * (\s - 1)}:\radius) 
    arc ({360/\n * (\s - 1)}:{360/\n * (\s)}:\radius);
}

\foreach \s in {0.35,0.15,0.286,0.33, 0.88, 0.44, 0.9,0.1,0.43,0.68,0.75,0.512}{
	\node[fill, circle,scale=0.3] at ({360*\s}:\radius) {};
}
\node[fill, diamond, scale=0.2, blue] at ({140}:\radius) [] (p1){$p$};
\node[fill, circle, scale=0.3, thick] at ({200}:\radius) (p){$v$};
\node[fill, diamond, scale=0.2, red] at ({0}:\radius) [] (p2){$p$};

\begin{scope}[transparency group, opacity = 0.5]
\draw[->, line width=1pt, red] (p)  to[out=0,in=-180] (p2);
\draw[->, line width=1pt, red] (p) to[out=0,in=-45] (p1);
\end{scope}

\draw[|-|, thick, red] ({360/\n * -0.25}:\radius+5) 
    arc ({360/\n * -0.25}:{360/\n * (0.25)}:\radius+5);

\draw[|-|, thick, blue] ({360/\n * -0.15}:\radius+10) 
    arc ({360/\n * -0.15}:{360/\n * (0.15)}:\radius+10) node[midway, right]{$S(\frac{v}{2})$};

\draw[|-|, thick, red] ({360/\n * 1.7}:\radius+5) 
    arc ({360/\n * 1.7}:{360/\n * (2.2)}:\radius+5);

\draw[|-|, thick, blue] ({360/\n * 1.8}:\radius+10) 
    arc ({360/\n * 1.8}:{360/\n * (2.1)}:\radius+10) node[midway, above left]{$S(\frac{v+1}{2})$};
    
\draw[|-|, thick, blue] ({185}:\radius+10) 
    arc ({185}:{215}:\radius+10) node[midway, left]{$S(v)$};
    
\draw[|-|, thick, red] ({180}:\radius+5) 
    arc ({180}:{220}:\radius+5);
    
\end{tikzpicture}
\caption{A node $v$ is connected \emph{each} node in red areas. Note that the swarms $S(v)$, $S(\frac{v}{2})$, and $S(\frac{v+1}{2})$ are real subsets.
These sets are chosen such that they contain $O(\log n)$ nodes \emph{w.h.p}.
Recall that in a classical Debruijn Graph it would only be connected to the nodes
left and right of $v$ i.e., $\frac{v}{2}$ and $\frac{v+1}{2}$.
}
\label{fig:LDS}
\vspace*{-0.5cm}
\end{figure}

We now present our overlay, the \emph{Linearized DeBruijn Swarm} (LDS), which is a combination of a well-analyzed network overlay of low degree, i.e., the Linearized DeBruijn Graph (LDG) presented in \cite{DBLP:conf/sss/RichaSS11,DBLP:conf/sss/FeldmannS17}
with techniques from robust overlays, i.e., the usage of logarithmic quorums that simulate a single node \cite{FiatSY05}. 
Note that the LDG is inspired by but not equivalent to the classical DeBruijn Graph. The notion of swarms were also described in Fiat et al.~\cite{FiatSY05}.

In the remainder of this section we present the LDS's topology and show some of its basic properties.
Each node $v \in V$ chooses a position $p_v \in [0,1)$ \emph{uniformly} and \emph{independently} at random.
Note that for the sake of convenience, the position of a node $v \in V$, we just write $v$ instead of $p_v$. 
It should always be clear from the context if we refer to the node, its \textsc{ID}, or its (current) position. However, we will attempt to make the context sufficiently clear in order to reduce the level of ambiguity. 

All nodes can calculate the distance to another node via the distance function $d: V^2 \rightarrow [0,1)$. 
Given two nodes $v,w \in V$ the distance function $d$ returns the shortest distance (hop counts) between $v$ and $w$ in the $[0,1)$-torus.

Formally, the function $d:V\times V \mapsto [0,1)$ is defined as follows,
\begin{equation}
d(v,w) := \begin{cases}
	|v - w| & \text{if } |v - w| \leq \nicefrac{1}{2}\\
    1 - |v - w| &\text{otherwise.}
	\end{cases}
\end{equation}

Furthermore, the distance function also satisfies the triangular inequality, i.e.,
\begin{equation}
d(v,w) \le d(v,z) + d(z,w) \qquad \text{ for all }z \in V.
\end{equation}

For convenience we introduce the following notions for the relation between two nodes $u,v \in V$.
If $|u-v|\leq \nicefrac{1}{2}$, then $u$ is \emph{left} (clockwise) of $v$ if $u < v$ and \emph{right} (clockwise) otherwise.
For $|u-v| > \nicefrac{1}{2}$ the relation is reversed. 
Further, the set $\langle u,v \rangle \subset V$ contains all nodes which are right of $u$ and also left of $v$. 
Given a node $w$ we say that $u$ is closer to $w$ than $v$ if $d(u,w) < d(v,w)$. 
Last, we call a node $u$ the closest neighbor of $v$ if there exists no other node $u^\prime \in V$ closer to $v$ than $u$, i.e., $u := \underset{k \in V}{\text{argmin}}~d(k,v)$.

In the LDG presented by Richa et al.~\cite{DBLP:conf/sss/RichaSS11}, each node $v$ connects to exactly six other nodes. Namely, the two closest nodes left and right of $p_v$ and the two closest node left and right of the points $\frac{p_v}{2}$ and $\frac{p_v+1}{2}$ respectively. 
We extend this structure such that, each node connects to $O(\log n)$ closest neighbors.
For a given point $p \in [0,1)$ we call $S(p) \subset V$ the \emph{swarm} of $p$. 
It holds that $v \in S(p)$ if and only if $d(v,p) \leq \frac{c\lambda}{n}$.
Here, $c>1$ is a robustness parameter which should be chosen as small as possible.
These swarms (and \emph{not} the nodes) will be the building blocks of our overlay.
We call the swarms $S(p)$ adjacent to $S(p')$ if there is an edge $(v,w)$ between \emph{every} node $v \in S(p)$ and $w \in S(p')$. 
Note that each swarm $S(p)$ spans an interval of length $\frac{2c\lambda}{n}$ as it consists of two intervals of length $\frac{c\lambda}{n}$ to the left and right of $p$ respectively.
Sometimes it will be necessary to distinguish between the nodes in the left and right interval of $p$, so we define $S^L(p) := \{v \in S(p)\mid v\textit{ is left of }p\}$ and $S^R(p) := \{v \in S(p)\mid v\textit{ is right of }p\}$ as the left and right side of $S(p)$.
Given this notion of swarms, we can now formally define the adapted overlay.
Formally the LDS is defined as follows:
\begin{definition}[Linearized DeBruijn Swarm]\label{def_lds}
Let $V \subset [0,1)$ be a set of points with $|V| = n$ and $\lambda := \log{n}$. 
Then, the LDS $G_c := (V,E_L \cup  E_{DB})$ with parameter $c \in \mathbb{N}$
has the following properties:
\begin{itemize}
\item $(v,w) \in E_L \iff$ $w \in V$ and $d(v,w) \leq \frac{2c\lambda}{n}$.
\item $(v,w) \in E_{DB} \iff$ $w \in V$ and $d\left(\frac{v+i}{2},w\right) \leq \frac{3c\lambda}{2n}$ with $i \in \{0,1\}$.
\end{itemize}
\end{definition}

A Linearized DeBruijn Swarm is illustrated in Figure \ref{fig:LDS}. Over the course of this paper we will refer to the edges in $E_L$ as \textit{list} edges, whereas the
edges in $E_{DB}$ as \textit{DeBruijn} edges. 

From Definition~\ref{def_lds} we state the following lemma.
\begin{lemma}[Swarm Property]
\label{lemma:swarm_propery}
	Consider any point $p \in [0,1)$ and its swarm $S(p) \subset V$.
    Then $S(p)$ is adjacent to $S\left(\frac{p}{2}\right)$ and $S\left(\frac{p+1}{2}\right)$.
\end{lemma}
\begin{proof}
Let $p \in [0,1)$ be any point and $v \in S(p)$ be a node in $p$'s swarm. From Definition~\ref{def_lds} this implies that $d(p,v) \leq \frac{c\lambda}{n}$.
We now show that node $v$ has a connection to every node in $S\left(\frac{p}{2}\right)$ and $S\left(\frac{p+1}{2}\right)$.
For the proof we only analyze adjacency to $S\left(\frac{p}{2}\right)$, since the other case is analogous.
We distinguish between the following two scenarios.
\begin{enumerate}
\item If $|p-v| \leq \frac{1}{2}$, then from (1) it holds that,
\begin{equation}
	d\left(\frac{p}{2},\frac{v}{2}\right) := \Big\lvert\frac{p}{2}-\frac{v}{2}\Big\lvert = \frac{1}{2} |p-v| \leq \frac{1}{2} \frac{c\lambda}{n}.
\end{equation}
Let $u$ be any node in $S(\frac{p}{2})$, then $d\left( u, \frac{p}{2}\right) \le \frac{c\lambda}{n}$. Then using (2) and (3) we get,
\begin{align*}
    d\left(u, \frac{v}{2}\right) &\le d\left(u, \frac{p}{2}\right) + d\left(\frac{p}{2}, \frac{v}{2}\right)
    \\&\le \frac{c\lambda}{n} + \frac{c\lambda}{2n}
    \\&=\frac{3c\lambda}{2n}.\tag{4}
\end{align*}
From Definition~\ref{def_lds}, since node $v$ has a DeBruijn edge to each node $w \in V$ with $d\left( \nicefrac{v}{2}, w\right) \le \nicefrac{3c\lambda}{2n}$, then using (4) the lemma follows.
\item Otherwise, $|p-v| > \frac{1}{2}$. 

Observe that this only occurs if either $p \in \left[0, \frac{c\lambda}{n}\right]$ or $p \in \left[1-\frac{c\lambda}{n},1\right)$, i.e., the point $p$ is close to $0$ or $1$ and $v$ lies on the opposite site of the interval.
We distinguish between the following two cases.
\begin{enumerate}
\item If $p \in \left[0, \frac{c\lambda}{n}\right]$, then it also holds that $\frac{p}{2} \in [0, p]$. 
Implies, 
\begin{align*}
d\left(v,\frac{p}{2}\right) \leq d(v,p) \leq \frac{c\lambda}{n}. \tag{5}
\end{align*}

Then, by the triangle inequality and inequality (5), it holds that for every node $u \in S\left(\frac{p}{2}\right)$,
\[d(u,v) \leq d\left(u,\frac{p}{2}\right) + d\left(\frac{p}{2},v\right) \leq \frac{2c\lambda}{n}.\] 
From Definition~\ref{def_lds}, the node $u$ is then a list neighbor of $v$ and the lemma follows.
\item Otherwise, if  $p \in \left[1-\frac{c\lambda}{n},1\right)$ then it holds that $\frac{p}{2} \in \left[\frac{1}{2}-\frac{c\lambda}{2n},\frac{1}{2}\right)$. 
Now consider the distance between $\frac{p}{2}$ and $\frac{v+1}{2}$.
Here, it holds
\[
	d\left(\frac{v+1}{2}, \frac{p}{2}\right) := \Big\lvert\frac{v+1}{2} - \frac{p}{2}\Big\lvert = \frac{1}{2}\vert{v+1} - p\vert. 
\]
Observe that since $v < p$ and $p < v+1$, it holds that $|(1+v)-p|$ is equivalent to $1-|v-p|$.
Therefore, the inequality simplifies to
$$
    d\left(\frac{v+1}{2}, \frac{p}{2}\right) = \frac{1}{2}(1-|v-p|) = \frac{1}{2}d(v,p).
$$
Since $\frac{1}{2}d(v,p) \leq \frac{c\lambda}{2n}$, applying the triangle inequality we get that for every node $u \in S(\nicefrac{p}{2})$,
\begin{align*}
    d\left(u,\frac{v+1}{2}\right) &\le d\left(u,\frac{p}{2}\right) + d\left(\frac{v+1}{2}, \frac{p}{2}\right)
    \\&\le \frac{c\lambda}{n} + \frac{c\lambda}{2n}
    \\&=\frac{3c\lambda}{2n}. \tag{6}
\end{align*}
From Definition~\ref{def_lds}, since node $v$ has a DeBruijn edge to each node $w \in V$ with $d\left( \nicefrac{v+1}{2}, w\right) \le \nicefrac{3c\lambda}{2n}$, then using (6) the lemma follows.
\end{enumerate}
\end{enumerate}
\end{proof}

The next lemma shows that if nodes are assigned to the $[0,1)$-interval uniformly and independently at random, then all swarms have roughly the same size \emph{w.h.p.}
\begin{lemma}[Swarm Size]
\label{lemma:swarm_size}
Let the swarm size to be $\frac{c\lambda}{n}$ with $c \geq 12k$ and assume all nodes pick their positions uniformly and independently at random. 
Then for any point $p \in [0,1)$ it holds that,
\[
    \pr{\frac{1}{2}c\lambda < |S(p)| < 2c\lambda} \geq 1 - \frac{2}{n^k},
\]
where $|S(p)|$ denoted the number nodes in $S(p)$.
\end{lemma}
\begin{proof}
The proof follows from a standard application of the Chernoff bound. 
Given that each node picks its position uniformly and independently at random in $[0,1)$-interval, the probability that a node chooses a point in an interval of length $\frac{c\lambda}{n}$ is exactly $\frac{c\lambda}{n}$\footnote{Note that since $c$ is a constant and $\lambda \in O(\log n)$, therefore for big enough $n$, $\frac{c\lambda}{n} \le 1$.}. 

Let $X$ be a random variable that counts the number of nodes in $S(p)$. For each $v \in V$, let $X_v$ be a binary random variable such that, $X_v=1$, if $v$ picked a position in a swarm $S(p)$ and $0$, otherwise. Clearly, it holds that $X := \sum_{v \in V} X_v$ and $\E{X} = c\lambda$ for every $p$. 
Furthermore, it holds that $X := \sum_{v \in V} X_v$ is the sum of independent random variables.
Hence, using Lemma~\ref{chernoff_bound} it holds that for $c \geq 12k$,
\begin{align*}
\pr{ X \leq \frac{1}{2}\E{X} } \leq e^{-\frac{\E{X}}{12}} \leq e^{-k\lambda} = n^{-k}. \tag{7}
\end{align*}
    
Moreover, it holds that for $c \geq 2k$,
\begin{align*}
\pr{ X \geq 2\E{X} } \leq e^{-\frac{\E{X}}{2}} \leq e^{-k\lambda} = n^{-k}. \tag{8}
\end{align*}
Then, using (7), (8) and the fact that the expected number of nodes in an interval of size $\frac{c\lambda}{n}$ is $c\lambda$, the union bound yields the desired result.

\end{proof} 

\subsubsection*{Routing in a Linearized DeBruijn Graph}
Our routing algorithm described in Section~\ref{sec:routing} is based on the classical LDG's routing algorithm presented in~\cite{DBLP:conf/sss/RichaSS11,DBLP:conf/sss/FeldmannS17}. Before we go into the details of our algorithm,  
we will first recall the classical LDG's routing algorithm.
Routing in the LDG 
works by a bitwise adaption of the target address.
Recall that we assume each node knows $\lambda$.
Therefore, given any destination $p \in [0,1)$, a node can calculate the first $\lambda$ bits $(p_1, \dots, p_\lambda)$ of $p$'s binary representation.
Then, starting with the least significant bit $p_\lambda$, 
the node $v$ sends the message to the node closest to $x_1 := \frac{v+p_\lambda}{2}$.
For this, it uses the corresponding DeBruijn edge.
After that, the message is sent to the node closest to $x_2 := \frac{x_1+p_{\lambda-1}}{2}$.
This goes on until the first bit $p_1$.
Finally, as a consequence of Lemma~\ref{lemma:swarm_size} there are \emph{w.h.p.} only $O(\log{n})$ hops over list edges left to $p$. 

\begin{definition}[Trajectory]\label{def:trajecory}
Let $v \in V$ be a node and $p \in [0,1)$ be an arbitrary point.
Further let $(v_1, \dots, v_{\lambda}) \in \{0,1\}^{\lambda}$ and $(p_1, \dots, p_\lambda) \in \{0,1\}^{\lambda}$ 
be the $\lambda$ most significant bits of $v$ and $p$ respectively. 
Then the trajectory 
$\tau(v,p) := x_0, \dots, x_{\lambda+1} \in [0,1)^{\lambda+2}$ 
is a series of points defined as follows.
\[
	x_i := \begin{cases}
    v&  i=0\\
	(p_{\lambda-i+1}, \dots, p_{\lambda}, v_1, \dots, v_{\lambda-i})&  i\leq\lambda\\
    p&  i=\lambda+1\\
	\end{cases}
\]
\end{definition}
For each point $x_i$ in the trajectory, forward the message to the node closest to it.
Then, forward the message along list edges until it reaches the target.

\begin{figure}
\centering
\begin{tikzpicture}[thick,scale=8]
  \draw[|-|, thick] (0,0) node [below] {$0$} 
  to (.25,0) node [fill, circle,scale=0.3,label=below:$$] (x1){} 
  to (.5,0) node [fill, circle,scale=0.3,label=below:$$] (x0){} 
  to (.625,0) node [fill, circle,scale=0.3,label=below:$$] (x2){} 
  to (.8125,0) node [fill, circle,scale=0.3,label=below:$$] (x3){} 
  to (1,0) node [below] {$1$};
  
  \draw[|-|, blue] (.2,-.01) 
  to (.25,-.01) node [below] {$S(x_1)$}
  to (.3,-.01);
  
  \draw[|-|, blue] (.45,-.01) 
  to (.5,-.01) node [below] {$S(x_0)$}
  to (.55,-.01);

\draw[|-|, blue] (.575,-.01) 
  to (.625,-.01) node [below] {$S(x_2)$}
  to (.675,-.01);

\draw[|-|, blue] (.7625,-.01) 
  to (.8125,-.01) node [below] {$S(x_3)$}
  to (.8625,-.01);

 \begin{scope}[transparency group, opacity = 0.5]
    \path[->, line width=4pt, red] (x0)  edge   [out=90, in=90,line width=4pt, red] (x1);
  \path[->, line width=4pt, red] (x1)  edge   [out=90, in=90, line width=4pt, red] (x2);
  \path[->, line width=4pt, red] (x2)  edge   [bend left=90, line width=4pt, red] (x3);
\end{scope}
  
\end{tikzpicture}
\caption{Example for the first four steps of a trajectory}
\end{figure}

\section{Impossibility Results and Lower Bounds}
\label{sec:impossible}
In this section, we present two fundamental impossibility results for our model.
First, we show that it is impossible to maintain a connected overlay under massive churn
and a $(0,\infty)$-late adversary. This adversary always has up-to-date information about the topology, but is oblivious of everything else, e.g., the sent messages, the nodes' random decisions, etc. 
Second, we show the necessity that new nodes can only join via bootstrap nodes that are in the network for at least $2$ rounds.

We begin with an auxiliary lemma and show that any adversary with a churn rate $(\alpha n, O(\log n))$ can completely exchange the set of nodes within $O(\log n)$ rounds if $\alpha$ is a constant. 
Therefore it simply churns out the nodes in chunks of size $\alpha n$. 
\begin{lemma}
\label{lemma:gone}
Consider any $(a,b)$-late adversary that proposes a series of nodes $\mathcal{V} := V_0, V_1, \dots $ such that for all $V_t \in \mathcal{V}$ it holds that  $|V_t| \in \Theta(n)$ and $|V_t \cap V_{t+T}| \geq (1-\alpha)n$  for  $T \in O(\log n)$ and some constant $\alpha \in (0,1)$.
Then within $O(\alpha^{-1}\log n)$ rounds the adversary can churn out all nodes from $V_0$.
\end{lemma}
\begin{proof}
 For simplification assume that $\alpha := \frac{1}{\beta}$ with $\beta>1$.
and that $\frac{n}{\beta}$ is an integer. If this is not the case, the adversary can add/remove some nodes such that it holds.
    
Now divide the set $V_0$ into $\beta$ disjoint subsets $V_1^1, \dots, V_0^\beta$ of size $\frac{n}{\beta}$ each. The adversary's strategy is as follows: For $i \in 1, \dots, \beta$ churn out $V_0^i$ in round $T \cdot i$ and replace it with a set $\tilde{V}^i$ of the same size. Thus, the following three statements hold:
    \begin{enumerate}
        \item Recall that $\beta$ is a constant. Thus, after $\beta \cdot T$ rounds all nodes $V_0$ have been churned out.
        \item  For each node that is churned out in round $t$ a new node is churned in. This ensures that the number of nodes in each round is $n$.
        \item Within a period of length $T$ only one set of size  $\frac{n}{\beta} = \alpha n$ is churned out. Thus, there always is a subset of size $(1-\alpha)n$ which remains in the system for $T$ rounds.
    \end{enumerate}
    Thus, the strategy fulfills all requirements and churns out all nodes from $V_0$ within $O(\log n)$ rounds. This was to be shown.
\end{proof}

We now show the impossibility for a $(0,\infty)$-late adversary. 
The idea behind this proof is as follows: Consider a node $v \in V_t$ joining the network in round $t$ via some node $w \in V_t$. 
Then only $w$ and all nodes $w$ communicates with know $v$. 
A $0$-late adversary can immediately detect and churn out these nodes. 
Thus, no node in the entire system knows $v$. 
The result is stated in the following lemma. 
\begin{lemma}
\label{lemma:impossible_0late}
A $(0,\infty)$-late adversary with churn rate $(\alpha n, O(\log n))$ for some $\alpha \in (0,1)$, can disconnect any overlay in $O(\log n)$ rounds.
\end{lemma}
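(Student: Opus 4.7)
The plan is to construct an explicit adversarial strategy that isolates a single newly-joined node in just two rounds, producing a disconnected graph well before the $\mathcal{O}(\log n)$ budget is exhausted. The crucial observation is that immediately after a join, a new node's topological footprint is minimal, and $0$-lateness gives the adversary exactly the information needed to build a ``firewall'' of churned nodes around it before its $ID$ can spread.

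I would proceed as follows. Let $v \in J_t$ join in round $t$ via a bootstrap $w$. At this moment only $w$ knows $v$'s $ID$, so the only way any other node can learn $v$'s $ID$ during round $t$ is through a message from $w$. Let $C_w$ be the set of at most $\mathcal{O}(\log n)$ recipients of $w$'s round-$t$ messages. By $0$-lateness, the adversary observes $G_t$ as soon as round $t$ ends and therefore knows $C_w$ exactly. At the start of round $t+1$, I would have the adversary churn the set $\{w\} \cup C_w$. Since churned nodes leave before processing incoming messages, neither $w$ nor any member of $C_w$ ever processes a round-$t$ message, and so after round $t+1$ no node in $V_{t+1} \setminus \{v\}$ knows $v$'s $ID$.

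Meanwhile $v$ itself, which is not churned, does receive $w$'s round-$t$ message and learns some set $S$ of $\mathcal{O}(\log n)$ new $IDs$. In round $t+1$ it attempts to contact $\{w\} \cup S$; the out-edges that actually appear in $G_{t+1}$ therefore form a subset of $S \cap V_{t+1}$. Because no surviving node knows $v$'s $ID$, $v$ has no in-edges in $G_{t+1}$. Using $0$-lateness once more to inspect $G_{t+1}$, the adversary churns $S \cap V_{t+1}$ (another $\mathcal{O}(\log n)$ nodes) at the start of round $t+2$. After these two rounds of churning---a total of $\mathcal{O}(\log n)$ nodes, far below the $(\alpha n, \mathcal{O}(\log n))$ budget---every $ID$ in $v$'s memory belongs to a node no longer present, and no surviving node knows $v$'s $ID$, so $v$ contributes no edges to $G_{t+2}$ and is an isolated vertex in a disconnected graph.

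The main obstacle is to justify that the firewall $\{w\} \cup C_w$ is truly complete, i.e.\ that no node outside this set can possibly hold $v$'s $ID$ at the start of round $t+1$, even though the adversary does not see message contents. This follows from a short causal argument: the only node holding $v$'s $ID$ during round $t$ is $w$, and the only round-$t$ messages capable of transmitting $v$'s $ID$ are those $w$ sends, which are directed exclusively to $C_w$. This is precisely the point at which $0$-lateness is essential---a $1$-late adversary would not see $G_t$ in time to identify $C_w$ before they process $w$'s messages in round $t+1$---so the strategy genuinely hinges on $a = 0$ and matches the separation between the $(0,\infty)$-late and $(2,\mathcal{O}(\log n))$-late models considered in the paper.
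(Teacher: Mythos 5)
Your first step --- the ``firewall'' --- is exactly the paper's opening move: only the bootstrap node $w$ holds $v$'s $ID$ during round $t$, the $0$-late adversary reads off $w$'s recipient set $C_w$ from $G_t$, and churning $\{w\}\cup C_w$ before they process $w$'s messages guarantees that no surviving node ever learns $v$'s $ID$. That part is sound and is precisely where $0$-lateness is used.

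The gap is in your second step. You have the adversary churn ``$S \cap V_{t+1}$'' by inspecting $G_{t+1}$, and then conclude that every $ID$ in $v$'s memory is dead. But the adversary is $(0,\infty)$-late: it \emph{never} learns message contents or internal states, so it never learns $S$ itself --- it only sees the out-edges $v$ actually creates. Since the lemma must hold against an arbitrary algorithm, $v$ is under no obligation to contact all of $\{w\}\cup S$ in round $t+1$; a defensive algorithm can keep most of $S$ in reserve and reveal one fresh $ID$ per round. Then (i) after your two rounds $v$ still holds live $ID$s, so it is not permanently separated (an empty neighborhood in a single $G_{t+2}$ because $v$ chose to stay silent is not a disconnection in any meaningful sense), and (ii) a node $u\in S$ that $v$ contacts in a later round would learn $v$'s $ID$ and reintroduce it unless the adversary keeps churning $v$'s contacts forever. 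The paper closes this hole differently: it observes that $v$'s reserve, whatever it is, is contained in the known superset $V_0$ of initial nodes, and has the adversary (a) churn whoever $v$ contacts in \emph{every} subsequent round and (b) gradually churn out \emph{all} of $V_0$ within the $(\alpha n,\mathcal{O}(\log n))$ budget, after which $v$ provably knows no living node and nobody knows $v$. Your proof needs this exhaustion argument (or an equivalent one) to go through; as written, the two-round claim is not justified.
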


\begin{proof}
Let the execution start at round $0$ and let $V_0$ be the initial set of nodes
with $|V_0| := n$. 
Now consider the following strategy: 
\begin{enumerate}
    \item Let a node $v$ join the network in round $0$ via any node in $V_0$.
    \item Further, let a node $w$ join via $v$ in round $2$.
\end{enumerate}
We will show that within $O(\log n)$ rounds,
a $0$-late adversary can separate $w$ from the network.

Let $D_2 \subset V_0$ be the set of all nodes that $v$ communicates with in round $2$.
Note that $|D_2| \in O(\log n)$ 
because we assume that each node 
can only communicate with a logarithmic number of other nodes in one round.
As $v$ is the \emph{only} node that knows $w$ in round 2, it holds 
that $w$ can only be known by nodes from $D' := D_2 \cup \{v\}$ in round $3$.
On the other hand, $v$ may have sent a set of IDs $D_w \subset V_0$ to $w$
in round $2$. These are the only nodes that $w$ knows.

Since $D'$ is of logarithmic size, there exists an $\alpha \in (0,1)$ such that it is well within the permitted churn size per round.
Let all nodes in $D'$ be churned out in round $3$, i.e., \emph{before} they can communicate with any more nodes. This ensures that $w$'s ID cannot be known to any node in the system.
 
Further, $w$ knows only the IDs received from $v$ as it has only communicated with $w$ until now.
Now continue as follows. 
In each round until all nodes from $V_0$ are gone:
\begin{enumerate}
\item Churn out every node $w$ communicates with. 
This ensures that no new node will learn $w$'s ID.
\item Use the strategy given above to churn out as much nodes from $V_0$ 
as possible and churn in the same amount of new nodes.
Note that the total number of nodes does not change.
\end{enumerate}
Using Lemma \ref{lemma:gone} one can easily verify that within $O(\log n)$ rounds all nodes from $V_0$ are gone.
Since all ID that $w$ knows belong to nodes from $V_0$ and no node in the network knows $w$, it is separated from the network. This concludes the proof.
\end{proof}

We continue with the restrictions for the joining nodes. 
The result is stated in the following lemma.
\begin{lemma}
\label{lemma:impossible_handshake}
Let $v \in V$ be a node that joined in round $t$.
Now assume a model where in round $t+1$ a new node $w \in V$ can
join the network via $v$.
Then a $(\infty,\infty)$-late adversary with churn rate $(\alpha n, O(\log n))$ for some $\alpha \in (0,1)$ can disconnect any overlay after $O(\log n)$ rounds.
\end{lemma}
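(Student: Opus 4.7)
The plan is to mirror the isolation strategy of Lemma~\ref{lemma:impossible_0late}, except that the weakened handshake takes the place of topological knowledge: we exploit the fact that a node $w$ joining via a one-round-old $v$ has an information cone so small that a purely time-based churn schedule can cut it off.

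First I would bound $w$'s information cone. Let the adversary trigger these joins: in round~$1$ a new node $v$ joins via some $b\in V_0$, and in round~$2$ a new node $w$ joins via $v$ (which is allowed under the weakened rule). Since only $b$ knew $v$ in round~$1$, the only message $v$ has processed at the start of round~$2$ is $b$'s round-$1$ message, carrying at most $\mathcal{O}(\mathrm{polylog}\,n)$ IDs, all lying in $V_0$. Hence $v$'s address book in round~$2$ is a polylog-sized set $S\cup\{b\}\subseteq V_0$, and that same set contains every ID $v$ can possibly forward to $w$, as well as the at most $\mathcal{O}(\log n)$ nodes $v$ can inform about $w$. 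Thus the set $\Pi_w$ of all nodes that either know $w$ or are known by $w$ at the end of round~$2$ lies in $\{v,w\}\cup V_0$ and is of polylogarithmic size.

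Next, I would have the adversary execute a purely time-based churn schedule that removes all of $V_0\setminus\{v\}$ over the following $\mathcal{O}(\log n/\alpha)=\mathcal{O}(\log n)$ rounds. This is within the churn budget, since $|V_0|=n$. To keep the population stable the adversary injects fresh joiners and chooses their bootstraps among post-round-$1$ joiners whenever possible, so that the knowledge of the fresh sub-population stays as disjoint as possible from $\Pi_w$. Once the phase ends, no node in $\Pi_w\setminus\{v,w\}$ remains; hence neither $v$ nor $w$ has any living node in its address book, and no living node holds $v$'s or $w$'s ID. The pair $\{v,w\}$ forms an isolated component, which proves the disconnection.

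The main technical obstacle is the invariant that, throughout the $\mathcal{O}(\log n)$-round churn window, every node that knows $w$ still lies in $V_0\cup\{v,w\}$. A priori, some $V_0$-node in $\Pi_w$ could, before being removed, forward $w$'s ID to a fresh joiner whose continued presence would reconnect $w$ to the network. The proof must therefore argue that such a leak is impossible under the adversary's bootstrap policy: $V_0$-nodes initially only know other $V_0$-IDs, and the adversary's channeling of new joiners through post-round-$1$ bootstraps must guarantee that no $V_0$-node carrying $w$'s ID ever acquires a fresh ID along the way. Formalizing this against an arbitrary routable protocol, in particular in the presence of reply messages from fresh joiners back to $V_0$-nodes, is the technical crux; the rest of the argument is then a straightforward accounting of the churn budget.
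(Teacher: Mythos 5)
There is a genuine gap, and you have correctly located it yourself: the containment of $w$'s ID during the $\mathcal{O}(\log n)$-round draining phase is not a technicality to be formalized later, it is the point where the argument breaks. You create the victim pair $\{v,w\}$ in rounds $1$--$2$ and then need the invariant that, for the next $\Theta(\log n)$ rounds, no surviving node ever learns $w$'s ID. But the adversary here is $(\infty,\infty)$-late: unlike the $(0,\infty)$-late adversary of Lemma~\ref{lemma:impossible_0late}, it never sees who communicates with whom, so it cannot ``churn out every node $w$ communicates with,'' and a purely time-based schedule cannot target the carriers of $w$'s ID. Against an arbitrary protocol this invariant is simply false: $w$ knows a polylog-sized set $S\subseteq V_0$, it may contact those nodes in round $3$, they then hold $w$'s ID, and the protocol may have every such node gossip that ID to $\mathcal{O}(\log n)$ further nodes per round. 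Within $\mathcal{O}(\log\log n)$ rounds $w$'s ID can reach a constant fraction of the network, including arbitrarily many fresh joiners, regardless of which bootstrap nodes the adversary assigns to them (bootstrap choice controls who a joiner is introduced \emph{by}, not who may later send it a message). So the final claim that $\{v,w\}$ ends up isolated does not hold.

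The paper's proof avoids this by never letting the victim exist during the draining phase. It builds a \emph{chain} $v_0,v_1,\dots,v_T$ where $v_i$ joins in round $i$ via $v_{i-1}$ and $v_{i-1}$ is churned out immediately afterwards. An induction shows $D_i\subseteq D_1\cup\{v_{i-1}\}$: since only $v_{i-1}$ knows $v_i$ at join time, $v_i$ can only inherit $v_{i-1}$'s (already stale) references, and since $v_{i-1}$ dies at once, $v_i$'s ID never propagates. The chain thus transports knowledge of $V_0$ forward in time without ever exposing a long-lived node; in parallel, $V_0$ is drained obliviously. Only at round $T$, when $V_0$ is gone, does the adversary attach the final node $v_{T+1}$ via $v_T$ --- at which instant $v_T$ knows only dead IDs and is known to nobody alive, so there is no window in which its ID could spread. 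Your round-$1$/round-$2$ setup uses the weakened handshake only once; the paper's construction uses it in every round of the chain, and that repeated use is what makes the lower bound go through.
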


\begin{proof}
Consider a set of nodes $v_1, \dots, v_T, v_{T+1}$ for some $T \in O(\log n)$ that join the network one after another
such that $v_i$ joins via $v_{i-1}$ in round $i$.
Further, for each $v_i$ let $D_i$ be the set $ID$s that it initially receives from $v_{i-1}$.
Let the execution start at round $0$ and let $V_0$ be the initial set of nodes with $|V_0| := n$.

We first show that the adversary can create a situation where a node $v_{T+1}$ that joined the 
network in round $T+1$ only receives IDs of churned out nodes and thereby disconnecting the network. Consider the following strategy: 
Let $V' := v_0, \dots, v_T$ be a set of nodes such that each $v_t \in V'$ with $t>1$ joins the network in round $t$ via node $v_{t-1}$. To be precise, this implies that at the beginning of round $t$, $v_{t-1}$ knows the $ID$ of $v_{t}$ an can send a message to $v_t$. Any of these messages will arrive $t+1$.
Further, each $v_t \in V'$ is churned out in round $t+2$, i.e., immediately after the round $v_{t+1}$ joined. 

Now let $D_t \subset V$ be set of all IDs that $v_t$ knows in round $t+1$.
We now claim that it holds $D_t \subset D_1 \cup \{v_{t-1}\}$ for all $t>1$.

We proof the claim via induction:
\begin{itemize}
    \item For the induction's beginning consider $t=1$ and the corresponding node $v_1$. Here, the claim trivially holds as $D_1 \subseteq D_1$. 
    \item For the induction's step consider any $v_t$ with $t>1$ and assume that the claim holds for $v_{t-1}$, i.e., it holds that $D_{t-1} \subset D_1 \cup \{v_0, \dots, v_{t-2}\}$.
Now consider the join of $v_{t}$ in round $t$.
Any message that reaches $v_{t}$ in round $t+1$ must be sent in round $t$. 
However, in this round \emph{only} $v_{t-1}$ knows $v_{t}$ and therefore only $v_{t-1}$ may share the references with $v_{t}$.
Thus, $D_{t}$ can only be a subset of $D_{t-1} \cup \{v_{t-1}\}$. 
This proves the claim.
\end{itemize}

Therefore, a node $v_{T}$ that joins in round $T \in O(\log n)$ only receive references to nodes from $V_0$ in round $T+1$ (from $v_{T-1}$). 
Note that any further reference can reach $v_T$ only in round $T+2$.

By Lemma \ref{lemma:gone} the adversary may have churned out all nodes from $V_0$ for some $T \in O(\log n)$.
Now let a new node $v_{T+1}$ join via $v_T$ in round $T+1$. 
Then $v_T$ cannot introduce $v_{T+1}$ to any node currently in the network and further cannot introduce any node $w \in V_{T+1}$ to $v_{T+1}$. 
Thus, once $v_T$ is churned out in round $T+2$, it holds that $v_{T+1}$ is isolated from $V_{T+2}$.
This was to be shown.
\end{proof}

We would like to remark that this impossibility is different from the similar statement in \cite{AugustineP0RU15} because we allow a node to communicate with $O(\log n)$ different nodes instead of constantly many.

\section{Routing and Sampling in the LDS under Churn}
\label{sec:routing}

\begin{figure}
\begin{algorithm}[caption={\routing}, label={alg:routing}]
$\textbf{Desc:}$ This algorithm is executed on a series of routable graphs $\mathcal{D} := (D_1, H_1, \dots)$. The algorithm routes a message $m$ from any node $u \in V_1$ to swarm $S_{2\lambda+2}(p)$. 

$\textbf{Note:}$ The following code is executed by each node $u \in V^t$ every round $t$. $\emph{W.l.o.g.}$ the Forwarding step executed $\text{in}$ even rounds and the Handover step is executed $\text{in}$ odd rounds. The initial step may be executed $\text{in}$ odd or even rounds. Messages are delivered to their target swarms $\text{in}$ even or odd round respectively.

$\rule[1pt]{3.25cm}{1pt}$ $\textbf{Initial step}$ $\rule[1pt]{3.5cm}{1pt}$
$\textbf{Upon sending a message } $m$ \textbf{ to}$ $p$
  $(d_1, \dots, d_{\lambda}) \longleftarrow$ $\lambda$ most significant bits of $p$
  if $t$ is even:
    Send $\big(p,0,(d_1, \dots, d_{\lambda}),m,t\big)$ to all $w \in S_t(v)$
  else:
    Send $\big(p,0,(d_1, \dots, d_{\lambda}),m,t\big)$ to all $w \in S_{t+1}(v)$
    
$\rule[1pt]{3.05cm}{1pt}$ $\textbf{Forwarding Step}$ $\rule[1pt]{3.05cm}{1pt}$
$\textbf{Upon receiving}$ $m := \big(p,k,(d_1, \dots, d_{\lambda_v}),m,t\big)$
  if $k \leq \lambda_v$:
      $x \longleftarrow \frac{v+d_k}{2}$
      $(w_1, \dots, w_r) \longleftarrow$ $r$ nodes chosen u.i.r from $S(x)$
      Forward $m' :=  \big(p,k+1,(d_1, \dots, d_{\lambda_v}),\lambda_v\big)$ to all $w_i$
  else if $t$ is even:
      Deliver $m$ all nodes $w \in S_{t+1}(p)$ in next round
  else:
      Deliver $m$ all nodes $w \in S_{t}(p)$
      
$\rule[1pt]{3.05cm}{1pt}$ $\textbf{Handover Step}$ $\rule[1pt]{3.05cm}{1pt}$
$\textbf{Upon switching to }D_{i+1}$
  for each message $m$
      $(w_1, \dots, w_C) \longleftarrow$ $r$ nodes chosen u.i.r from $S_{t+1}(x)$
      Forward $m$ to all $w_i$
\end{algorithm}

\end{figure}

In this section, we present a low-congestion routing algorithm \routing.
The algorithm delivers \emph{each} message \emph{w.h.p.} even in the presence of churn and a changing communication structure. 
We also present a sampling algorithm \sampling that allows each node to send a message 
to a uniformly picked random node. 
The underlying technique is adapted from King et al. \cite{KingS04,KingLSY07}.


Our algorithm must perform routing over a dynamic series of graphs $\mathcal{D} := (D_1,D_2, \dots)$
where each $D_i$ is a LDS. However, there are two problems we need to address, i.e., the \emph{churn orchestrated by the adversary} and the \emph{dynamic reconfiguration of the overlays}. The obvious solution would be to send the message not only to the closest node of each trajectory point but to the whole swarm. However, observe that this trivial adaption to the LDG routing algorithm fails in the presence of churn. 
Given that any node on a message's trajectory can be churned out, a fraction of routing requests may never reach their destinations. 
In particular, if the adversary is aware of the topology, it could even churn out the whole swarm for a given trajectory point.
Therefore, we introduce the notion of a \emph{good} swarm adapted from Fiat et al.~\cite{FiatSY05}. 
In their work, a swarm is good if at least a fixed constant fraction of its nodes take part in the next round and hence refer to such nodes as \textit{good}. 

Here, we need a slightly stronger notion as we require the good nodes to be somewhat well spread over the swarm to enable our fast construction.
To be precise, we say the left (right) side of a swarm is good if a constant fraction of its nodes is good and the full swarm is good, if both its left and right side are good.
\color{black}
Further, a LDS is good if all its swarms are good.
This property implies that there is always at least a constant fraction of good nodes in each swarm that can forward the message. 

Besides the churn there is the problem of the dynamically rearranging overlay.
In particular, the main algorithm we later introduce in Section~\ref{sec:maintainer} will create a series of overlays $D_1, D_2, \dots$ which will persist for only $2$ rounds each. That means a node changes its position every $2$ rounds. If now every node would keep all its routing messages and forward them from its new position, 
they would lose all the progress they made so far.
Therefore, we define the so-called \emph{handover} procedure using a helper graph $H_i$.
For any point $p \in [0,1)$, let $S_i(p)$ be the swarm of $p$ in $D_i$ and $S_{i+1}(p)$ be the swarm of $p$ in $D_{i+1}$. Denote by $|S_i(p)|$ the number of nodes in the swarm $S_i(p)$.
We assume that during the change from $D_i$ to $D_{i+1}$ each node from $S_i(p)$ can send a message to any set of nodes from $S_{i+1}(p)$, i.e., the nodes from a helper graph $H_i$ where the swarms $S_{i}(p)$ and $S_{i+1}(p)$ are adjacent. 
Formally, it is defined as follows:
\begin{definition}[Handover Graph]
\label{def:handover_graph}
Let $D_0, D_1, \dots $ be a series of LDS with $D_i = (V_i, E_i)$. 
Then the helper graph $H_i := (V_i \cap V_{i+1}, E^H_i)$ is defined as follows:
\[
    (v,w) \in E_i^H \Longleftrightarrow \exists~p \in [0,1) : v \in S_i(p) \wedge w \in S_{i+1}(p) 
\]
\end{definition}
Given this definition, we can easily see that the following holds:

\begin{lemma}[Handover Property]
\label{def:handover_graph_aux}
Let $p \in [0,1)$ be an arbitrary point in the $[0,1]$-interval and let $S_{i+1}(p)$ 
be its swarm in $D_{i+1}$. 
Then in $H_i$ every node in $S_{i}^r(p)$ knows every other node in $S_{i+1}^r(p)$.
The same holds for the left side.
\end{lemma}
\begin{proof}
As we will see, the property follows almost directly from the definition of swarms and the handover graph.
We will prove the lemma only for the right side as the proof for the left side is completely analogous.
Let $v$ be any node in $S^R_{i+1}(p)$.
By the definition of the Handover graph every node in $S_i(p_v)$ knows $v$ as clearly $v \in S_{i+1}(p_v)$.
Since $v$ is right of $p$ and within distance $\frac{c\lambda}{n}$ of $p$ as it is in $p's$ swarm,
it holds that $S^R_i(p) \subset S_i(v)$.
Thus, by combining it with the definition of the Handover Graph, we get that every node $S^R_i(p) \subset S_i(v)$ must know $v$.
Since this holds for all nodes $v \in S^R_{i+1}(p)$, all nodes in $S_i^R(p)$ must know all nodes in $S_{i+1}^R(p)$ and the lemma follows.
\end{proof}
\color{black}

Therefore, the switch can (almost) be handled like every other routing step from one swarm to another. The only difference is that we only send messages from and to the left and right side of each swarm respectively. 
However, if we choose the swarm size $\frac{c\lambda}{n}$ big enough, this makes no difference.

Later, in Section \ref{sec:maintainer} we will see how to implement such a handover graph whereas here we just treat it as a property for a simpler description. 
Note that we call a helper graph $H_i$ good if for each $p \in [0,1)$ a $\nicefrac{3}{4}$-fraction of all nodes in $S_{i+1}(p)$ is not churned out in the next round.

We summarize our observations in following definition for a \emph{routable} series of graphs: 
\begin{definition}[Routable Graphs]
Let $\mathcal{D} := (D_0, H_0, D_1, H_1 \dots)$ be series of graphs defined on nodes $\mathcal{V}:=(V_0,V_1, \dots)$, s.t., each $D_i$ consists of nodes in $V_{2i}$.
Then we call $\mathcal{D}$ routable, if 
\begin{enumerate}
\item each $D_i$ is a LDS,
\item each $H_i$ enables is a handover from each $D_i$ to $D_{i+1}$, and
\item each $D_i$ (and $H_i$) is good, i.e., it holds $|S_i(p) \cap V_{2i+1} | \geq \nicefrac{3}{4} \cdot |S_i(p)|$ for all $p \in [0,1)$.
\end{enumerate}  
\end{definition}

\subsection{The Routing Algorithm}

We now present a routing algorithm \routing for a dynamic series of \emph{routable} graphs $\mathcal{D} := (D_1,H_1,D_2,H_2, \dots)$. 
A trivial extension of the LDG routing algorithm would send each message to the whole swarm of each trajectory point.
However, forwarding a message to a whole swarm would require $O(\log^2 n)$ messages to be sent in each step.
In order to limit this to $O(\log n)$ messages\footnote{Given that $\alpha n$ nodes may fail in single round and we want to route each message on the first try, it reasonable that
one needs $\mathcal{O}(\log n)$ copies of a message each round to ensure the survival of at least one $\emph{w.h.p.}$}, we adapt the approach as follows.
Assume a node $v \in V_t$ wants to route a message $m$.
We first forward $m$ to all nodes in $S(v)$.
Then, each node in $S(v)$ picks $r \in \Theta(1)$ nodes uniformly and independently at random from the next swarm $S(x_{1})$ in the trajectory and forwards $m$ to them. 
Then, each node that received $m$ at least once, forwards it to $r$ nodes in $S(x_{2})$ and 
so on. 
Only in the last step the message is forwarded to all nodes of the target swarm
to ensure that the whole swarm receives the message. Listing \ref{alg:routing} depicts the pseudo-code for \routing.

\subsubsection*{Analysis}
In this section, we analyze \routing. In particular, we prove that \emph{w.h.p.} all messages reach their target and further analyze the \emph{dilation}, i.e., the number of steps until a message reaches its target, and the \emph {congestion}, i.e., the number of messages handled by each node in a round.
Note that the latter depends on how many messages are sent each round and how their
destinations are chosen. We would like to remark that we assume that each node sends exactly the same number of messages and chooses their destinations independently and uniformly at random. 

\begin{theorem}
\label{thm:routing}
Let $\mathcal{D}$ be a routable series of LDS defined on nodes $\mathcal{V}:=(V_1,V_2, \dots)$.
Further, let each node $v \in V_1$ start $t \in \mathbb{Z}_+$ messages to random targets $p \in [0,1)$.
Then \routing with $r\ge 16$ delivers each message with dilation exactly $2\lambda+2$ and congestion $O(t\log n)$ w.h.p.
\end{theorem}
\begin{proof}
We begin the proof with the following lemma where we show that each message reaches its target swarm after exactly $2\lambda+2$ rounds if it is not churned out.

\begin{lemma}
\label{lemma:routing_correctness}
Let $\mathcal{D}$ be a routable series of LDS defined on nodes $\mathcal{V}$. Let $v$ be any node in $V_t \in \mathcal{V}$, which sends a message to point $p \in [0,1)$ along the trajectory $\tau(v,p)$ using $\text{\routing}$. Then it holds that the message arrives at a node in $S_{\lambda+1+t}(p)$ in exactly $2\lambda +2$ rounds.
\end{lemma}

\begin{proof}
The proof follows by an induction over the trajectory $i = 0, \dots, \lambda$ and the fact that by our choice of $\lambda$ the points $x_{\lambda}$ and $p$ are close by.

For the induction, observe that in each even step $j$ the message is forwarded along the trajectory and therefore moves from $S_j(x_{j-1})$ to $S_j(x_{j})$, whereas in each odd step the message is handed over and therefore moves from $S_j(x_{j})$ to $S_{j+1}(x_{j})$). 
Lemma \ref{lemma:swarm_propery} and the handover property imply that the nodes have necessary connections for each step but the last. 
We now prove by induction that, for each $i \in [\lambda]$, that each copy of the message is stored at a node $v \in S_i(x_i)$ in round $2i+t$. W.l.o.g. assume the message is initiated in round $t=0$.

\begin{itemize}
    \item[\textbf{(IB)}] Consider round $i=0$, i.e., the round in which the message is started.
    In this round, the message is at $v=x_0$ and therefore, is known by $S_{0}(x_0)$ in the first step. 
    \item[\textbf{(IS)}] Now suppose that the induction hypothesis holds for any arbitrary $i\in[0,\lambda-1]$. 
We get that in round $2i$ any copy of the message is at a node in $S_{i}(x_{i})$. 
Now since the round $2i$ is an even round, $\text{\routing}$ performs a forwarding step in $D_i$ along the trajectory, i.e., every copy of the message is sent to some node in $S_i(x_{i+1})$. Observe that, the swarm property in Lemma~\ref{lemma:swarm_propery} ensures that each node in $S_{i}(x_{i})$ has a connection to $S_{i}(\frac{x_{i}}{2})$ and $S_{\lambda}(\frac{x_{i}+1}{2})$. 
Now since $x_{i+1}$ is either $\frac{x_{\lambda}}{2}$ or $\frac{x_{\lambda}+1}{2}$, it holds that each node $S_{i}(x_{i})$ has an edge to \emph{each} node in $S_{i}(x_{i+1})$. 
Therefore, every copy of the message can be successfully forwarded to each node in $S_{i}(x_{i+1})$. 
Next, in round $2i+1$, $\text{\routing}$ performs a handover operation on every copy of the message in $S_{i}(x_{i+1})$ (overlay $H_i$). 
Now we use the Handover Property and observe that each node in $S_{i}(x_{i+1})$ has by the definition of $H_{i}$, an edge to \emph{each} node in $S_{i+1}(x_{i+1})$. 
Therefore, every copy of the message can be successfully be sent to a node in $S_{\lambda+1}(x_{\lambda+1})$ and therefore, available in round $2i+2 := 2(i+1)$. 
This concludes the induction step.
\end{itemize}
The induction above implies that the message is known by all nodes in $S_{\lambda}(x_{\lambda})$ in round $2\lambda$.
Now recall that $x_{\lambda}$ and $p$ are equal in their first $\lambda$ bits.
This implies that the distance between $x_{\lambda}$ and $p$ is at most 
\begin{align*}
    d(x_{\lambda}, p) &\leq \frac{1}{2^{\lambda}}\\
    &= \frac{1}{e^{\log(2)\lambda}} = \frac{1}{\left(e^{\lambda}\right)^{\log(2)}}\\
    &=\frac{1}{\left({\kappa n}\right)^{2\log(2)}}\\
    &\leq\frac{1}{\kappa n} \leq \frac{1}{n},\\
\end{align*}
as we defined $\lambda := 2\log(\kappa n)$.
Therefore, the swarms $S_{\lambda}(x_{\lambda})$ and $S_{\lambda}(p)$ are adjacent and the message can be forwarded and handed over as described in the induction step.
This proves the lemma.

\end{proof}

In the proof of Lemma~\ref{lemma:routing_correctness} we omitted the fact that not all nodes of a swarm forward the message as
they may be churned out before they can do so. 
Of course, if a complete swarm is churned out, the message surely can't be forwarded along the trajectory. However, as stated earlier we assume that all swarms are \emph{good}, i.e., only a constant fraction of each swarm is malicious and does not forward the message. 

\begin{lemma}
\label{lemma:good_nodes}
Consider a set $S$ with $|S| \geq \frac{c\lambda}{2}$ nodes picked uniformly at random from the set of all nodes in any given round. Let $G\subset S$ be the set of good nodes in $S$, Then it holds that for churn parameters $\alpha=\frac{1}{16}$ and $\kappa = \left(1 + \frac{1}{16}\right)$, with $c \geq 510k$,
\[
    	\pr{ |G| \leq \frac{14}{17}|S|} \leq \frac{1}{n^k}.
\]
\end{lemma}
\begin{proof}
Observe that by definition, for a churn rate of $\alpha$, there are at least $(1 - \alpha)n$ nodes that would survive into the next round. Therefore, there are at least $\frac{15}{16}n$ good nodes in any given round. Also, since there could be at most $\kappa n$ nodes, there are at most $\frac{17}{16}n$ nodes in any given round. Therefore, the fraction of good nodes in the $[0,1)$-interval in any given round is then at least $\frac{15}{17}$. Furthermore, nodes pick their position uniformly at random in the $[0,1)$-interval.

Let $S \subset V_t$ be a set of nodes in round $t$ picked uniformly at random from the $[0,1)$-interval such that, $|S| \ge \frac{c\lambda}{2}$. For each $v \in S$, let $X_v$  be a $\{0,1\}$ r.v. such that $X_v = 1$ if $v$ is a good node, and $X_v=0$ otherwise. We know that, \[\pr{X_v = 1} = \frac{15}{17}.\]

Therefore,

\[\E{X_v} = \frac{15}{17}.\]
The expected number of good nodes in the set $S$ is then,
\[\E{\sum_{v \in S}X_v} = \frac{15}{17} |S|.\]

The random variables $X_1, \dots, X_{|S|}$ represent random sampling without replacement and therefore observe a permutation distribution which are negatively associated~\cite{joag-dev1983}. 

Now, let $X_S = \sum_{v \in S}X_v$ be the number of good nodes in the set $S$. Then applying the Chernoff bound on NA random variables with $c \ge 510\cdot k$ we get,
\begin{align*}
\pr{ X_S \leq \left(1 - \frac{1}{15}\right)\E{X_S} } \leq \exp\left(-\frac{|S|}{255}\right) \leq \exp\left(-\frac{c\lambda}{510}\right) \leq \exp\left(-k\lambda\right) = n^{-k}.
\end{align*}
\end{proof}

Thus, as long as we observe only $O(n^{k})$ swarms and each side of each swarm has more than $\frac{c\lambda}{2}$ nodes w.h.p, a simple union bound implies that in all swarms both the left and right side are good w.h.p. 

Using Lemma~\ref{lemma:good_nodes} we can now show that the messages reach their destination w.h.p.
\begin{lemma}
\label{lemma:routing_whp}
Let $m$ be a message that is routed along $\tau(v,p) := x_0, \dots, x_{\lambda+1}$ using \routing.
Then, it holds that all nodes in $S_{\lambda+1}(x_{\lambda+1})$ receive $m$ w.h.p after exactly $2\lambda+2$ rounds for $r = 16$.
\end{lemma}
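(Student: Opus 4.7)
The plan is to decouple the \emph{structural} correctness, which Lemma~\ref{lemma:routing_correctness} already hands us, from the \emph{survival} of enough message copies at every trajectory point. The structural lemma guarantees that a copy of $m$ forwarded at step $i$ from $S_t(x_{i-1})$ has adjacent neighbors in $S_t(x_i)$ (via the swarm property and the handover), so the only thing that can go wrong is that too many carriers get churned out, or that the random forwarding choices miss too many targets. I would therefore prove, by induction on $i=0,1,\dots,\lambda$, the following invariant: the set $A_i \subseteq S(x_i)$ of non-churned-out nodes that hold $m$ immediately after the forwarding step onto $x_i$ satisfies $|A_i| \geq c'\, |S(x_i)|$ for some constant $c' \in (0, 3/4)$, with probability at least $1 - n^{-k}$ per step.

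For the base case, the source node distributes $m$ to its entire swarm $S(v)$, and the goodness assumption gives $|A_0| \geq \tfrac{3}{4}|S(v)|$ deterministically. For the inductive step, assume $|A_i| \geq c'|S(x_i)|$. Each node in $A_i$ picks $r$ nodes independently and uniformly at random from $S(x_{i+1})$, so for any fixed target $u \in S(x_{i+1})$, the probability that no sender picks $u$ is at most $(1 - r/|S(x_{i+1})|)^{|A_i|} \leq e^{-c'r}$. Hence the expected number of targets that receive $m$ is at least $(1 - e^{-c'r})|S(x_{i+1})|$, which, by choosing $r$ as a sufficiently large constant, exceeds any desired fraction $\beta < 1$ of the swarm. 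The indicator variables "$u$ receives $m$" are negatively associated (they arise from a balls-into-bins process), so a Chernoff bound applies and gives $|B_{i+1}| \geq \beta|S(x_{i+1})|$ w.h.p., where $B_{i+1}$ is the set of receivers \emph{before} churn. Applying the goodness assumption, at least $\tfrac{3}{4}|B_{i+1}| \geq \tfrac{3}{4}\beta|S(x_{i+1})|$ of these receivers are not churned in the next round; setting $\beta$ slightly above $4c'/3$ closes the induction.

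For the last hop, the invariant at step $\lambda$ yields $\Omega(\log n)$ surviving nodes in the swarm immediately before the target swarm $S(x_\lambda)$, and by the swarm property (Lemma~\ref{lemma:swarm_propery}) each such node has an edge to every member of $S(x_\lambda)$. A single surviving carrier therefore suffices to deliver $m$ to all of $S(x_\lambda)$ in the final broadcast round, and we have $\Omega(\log n) \geq 1$ w.h.p. Taking a union bound over the $2\lambda+2 \in \mathcal{O}(\log n)$ rounds and over the $\mathcal{O}(\log n)$ targets of the final broadcast preserves the high-probability guarantee. The handover step is treated as just another forwarding step: it is exactly a random redistribution from $S_t(x_i)$ into $S_{t+1}(x_i)$, and the goodness of $H_t$ plays the same role as the goodness of $D_t$, so the induction goes through unchanged.

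The main obstacle I expect is the dependence between rounds: which nodes carry $m$ in round $i+1$ is correlated with the random choices made in round $i$, and the adversary may choose the churn set $O_t$ based on the topology of $D_{t-2}$, which at the relevant rounds already encodes some information about past forwarding decisions. The clean way to handle this is to condition on the event "$|A_i| \geq c'|S(x_i)|$" entering step $i+1$ and observe that, \emph{given} this event, the $r|A_i|$ forwarding choices in step $i+1$ are still drawn uniformly from $S(x_{i+1})$ and independently of the adversary's concurrent choices (because the $2$-lateness prevents the adversary from reacting to them). This restores the negative association needed for Chernoff, at the cost of a slightly smaller surviving fraction that is absorbed into the slack between $c'$ and $\tfrac{3}{4}\beta$.
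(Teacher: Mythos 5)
Your proof is correct and follows essentially the same route as the paper's: an induction along the trajectory maintaining that a constant fraction of the (good) swarm carries the message, a balls-into-bins argument with negative association and a Chernoff bound for each forwarding/handover step, and a final broadcast to the whole target swarm via the swarm property. The only detail you elide is that bounding the miss probability by $e^{-c'r}$ requires $|S(x_i)|$ and $|S(x_{i+1})|$ to be within a constant factor of each other, which the paper supplies via Lemma~\ref{lemma:interval} and which is absorbed into your constants.
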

\begin{proof}
We prove by induction that, for each $t\in [2\lambda+2]$, it holds that if $t$ is an even round then, at least half of all good nodes in $S_{t/2}(x_{t/2})$ receive the message $m$ w.h.p. Otherwise, if $t$ is an odd round, at least half of all good nodes in $S^R_{\lceil t/2 \rceil}(x_{\lceil t/2 \rceil - 1})$ and $S^L_{\lceil t/2 \rceil}(x_{\lceil t/2 \rceil - 1})$ receive the message $m$ w.h.p.
\begin{itemize}
\item[\textbf{(IB)}] Consider round $t=0$, i.e., the round in which the message is initiated. If this is an even round, then observe that $\routing$ forwards message $m$ from node $v$, i.e., $x_0$, to \emph{all} nodes in $S_0(x_0)$. 
Using Lemma~\ref{lemma:swarm_size} and~\ref{lemma:good_nodes} we can conclude that at least half of all good nodes in the swarm $S_0(x_0)$ received the message $m$ and survive until the next round. 
Therefore, the induction hypothesis holds. 
Otherwise, if $t=0$ is an odd round, $\routing$ handovers the message $m$ from node $v$ to all nodes in $S_1(x_0)$. 
Again, using Lemma~\ref{lemma:swarm_size} and~\ref{lemma:good_nodes}, $S_1(x_0)$ is a good swarm and therefore, the induction hypothesis holds. 
\item[\textbf{(IS)}] Now suppose the induction hypothesis holds for any arbitrary $t\in [2\lambda+1]$ and w.l.o.g. assume that round $t = 2\lambda+1$ is an odd round. 

Note that the handover step works analogously, the only difference is that we do not send messages from one swarm to another but from the right/left side from one swarm to the right/left side of another.
However, since these sets are smaller, we assume the swarm size to be at least $\frac{c\lambda}{n}$ (and not $\frac{2c\lambda}{n}$).
This way, we cover both cases.
\color{black}
By the induction's hypotheses, at least half of all good nodes in $S_{\lambda+1}(x_\lambda)$ received message $m$ w.h.p. and therefore, each of these nodes forward $r$ copies of the message $m$ to nodes picked uniformly and independently at random from the swarm $S_{\lambda+1}(x_{\lambda+1})$. We now show that at least half of all good nodes in the swarm $S_{\lambda+1}(x_{\lambda+1})$ receive the message w.h.p. in the round $2\lambda+2$.
~

From Lemma \ref{lemma:swarm_size}, it holds that $|S_{\lambda+1}(x_{\lambda+1})| \leq 4|S_{\lambda+1}(x_\lambda)|$ w.h.p. Furthermore, by the induction hypothesis we know that at least half of all good nodes in $S_{\lambda+1}(x_\lambda)$
received $m$ and therefore forward $r$ copies of $m$. 
Thus, in total there are at least $\nicefrac{r}{8}\cdot |S_{\lambda+1}(x_{\lambda+1})|$ 
copies of $m$ sent to $S_{\lambda+1}(x_{\lambda+1})$. 
The probability that any of these messages is sent to a given node $v^\prime \in S_{\lambda+1}(x_{\lambda+1})$ 
is $\frac{1}{|S_{\lambda+1}(x_{\lambda+1})|}$, since the destinations are chosen uniformly at random.
Observe that one can view the forwarding of messages from $S_{\lambda+1}(x_{\lambda})$ to uniformly and independently picked nodes in $S_{\lambda+1}(x_{\lambda+1})$ as a balls-into-bins experiment, where (at least)
$\nicefrac{r}{8}\cdot |S_{\lambda+1}(x_{\lambda+1})|$ balls are thrown into $|S_{\lambda+1}(x_{\lambda+1})|$ bins. 
Using Propositions~\ref{NA_prop1} and~\ref{NA_prop2} one can show that the number of nodes in $S_{\lambda+1}(x_{\lambda+1})$ that receive at least one ball is NA~\cite{dubhashi1998balls}. 
For $i \in \{1, \dots, |S_{\lambda+1}(x_{\lambda+1})|\}$ and $j \in \{1, \dots, \nicefrac{r}{8}\cdot |S_{\lambda+1}(x_{\lambda+1})|\}$, let $X_{i,j}$ be an indicator random variable such that, $X_{i,j} = 1$ if message $j$ is sent to node $i$ (picked uniformly and independently at random) by \routing in round $2\lambda+1$, and $X_{i,j} = 0$ otherwise.

\begin{lemma}[Zero-One Lemma~\cite{dubhashi1998balls}]\label{z1_lemma}
If $Y_1, \dots, Y_n$ are zero-one random variables such that $\sum_i Y_i = 1$, then $Y_1, \dots, Y_n$ are NA.
\end{lemma}

For any fixed $j \in \{1, \dots, \nicefrac{r}{8}\cdot |S_{\lambda+1}(x_{\lambda+1})|\}$, let $Y_i := X_{i,j}$ for all $i \in \{1, \dots, |S_{\lambda+1}(x_{\lambda+1})|$. Then, from Lemma~\ref{z1_lemma} we know that the random variables $Y_1, \dots, Y_{|S_{\lambda+1}(x_{\lambda+1})|}$ are NA. Since each message $j \in \{1, \dots, \nicefrac{r}{8}\cdot |S_{\lambda+1}(x_{\lambda+1})|\}$ is destined to a node that is picked uniformly and independently at random, using Proposition~\ref{NA_prop1}, we can conclude that the set of random variables $(X_{i,j})_{ i \in \{1, \dots, |S_{\lambda+1}(x_{\lambda+1})|\}, j \in \{1, \dots, \nicefrac{r}{8}\cdot |S_{\lambda+1}(x_{\lambda+1})|\}}$ are NA.

For each $i \in \{1, \dots, |S_{\lambda+1}(x_{\lambda+1})|\}$ consider a non-decreasing function as follows,
\[
  X_i = 
  \begin{cases}
    1 & \sum_{j \in \{1, \dots, \nicefrac{r}{8}\cdot |S_{\lambda+1}(x_{\lambda+1})|} X_{ij} > 0\\
    0 & \text{otherwise.}
  \end{cases}
\]

Therefore, for each $i \in \{1, \dots, |S_{\lambda+1}(x_{\lambda+1})|\}$,
\begin{align*}
    \pr{X_i = 1} &= 1- \left(1 -\frac{1}{\mid S_{\lambda+1}(x_{\lambda+1})\mid}\right)^{|S_{\lambda+1}(x_{\lambda+1})|\cdot \nicefrac{r}{8}} 
    \\&\geq 1- \left(\frac{1}{e}\right)^{\frac{r}{8}}.
\end{align*}

From Proposition~\ref{NA_prop2}, we know that the random variables $X_1, \dots, X_{|S_{\lambda+1}(x_{\lambda+1})|}$ are NA. 

~

Let $G_{\lambda+1}(x_{\lambda+1}) \subseteq S_{\lambda+1}(x_{\lambda+1})$ denote the set of good nodes in $S_{\lambda+1}(x_{\lambda+1})$.
For each $v \in G_{\lambda+1}(x_{\lambda+1})$, $G_v$ be a $\{0,1\}$ random variable such that, $G_v = 1$ if node $v$ received at least one copy of $m$ from some node in $S_{\lambda+1}(x_{\lambda})$ and $G_v = 0$, otherwise.
Observe that $X_i$ denotes if a node in $S_{\lambda+1}(x_{\lambda+1})$ received at least one message in round $2\lambda+2$ when exactly $|S_{\lambda+1}(x_{\lambda+1})|\cdot \nicefrac{r}{8}$ messages are sent to $S_{\lambda+1}(x_{\lambda+1})$ uniformly and independently at random. Since $|S_{\lambda+1}(x_{\lambda+1})|\cdot \nicefrac{r}{8}$ is a lower bound on the number of message being sent, for each $v \in G_{\lambda+1}(x_\lambda+1)$ we have that,
\[\pr{ G_v = 1} \ge \pr{X_v =1}. \] 

~
Moreover, the random variables $(G_v)_{v \in G_{\lambda+1}(x_{\lambda+1})}$ are also NA as they can be seen as a subset of $(X_v)_{v \in S_{\lambda+1}(x_{\lambda+1})}$.
To see this, first recall that the random variables $(X_i)_{i \in \{1, \dots, |S_{\lambda+1}(x_{\lambda+1})|\}}$ are NA. In particular, this fact is independent of the number of messages and nodes in $S_{\lambda+1}(x_{\lambda+1})$. Now observe that for each $v \in G_{\lambda+1}(x_{\lambda+1})$, the random variable $G_v$ is a non-decreasing function of its associated random variable $X_v$. Therefore, we can conclude that the random variables $(G_v)_{v \in G_{\lambda+1}(x_{\lambda+1})}$ are also NA.

Let $G := \sum_{v \in G_{\lambda+1}(x_{\lambda+1})} G_v$ be a random variable that counts the number of good nodes in $S_{\lambda+1}(x_{\lambda+1})$ that received at least one copy of $m$ in round $2\lambda+2$. From Lemma~\ref{lemma:good_nodes}, we know that there are $\nicefrac{14}{17}$ fraction of good nodes in $S_{\lambda+1}(x_{\lambda+1})$ w.h.p.
Therefore, the expected number of good nodes that receive at least one message is given by,
\begin{align*}
\E{G} &= \sum_{v \in G_{\lambda+1}(x_{\lambda+1})} \E{G_v} 
\\&\geq \sum_{v \in G_{\lambda+1}(x_{\lambda+1})} \left(1-\frac{1}{e^{\nicefrac{r}{8}}}\right)
\\&= \left(1-\frac{1}{e^{\nicefrac{r}{8}}}\right)\frac{14}{17}|S_{\lambda+1}(x_{\lambda+1})|.
\end{align*}

To complete the induction it suffices to show that,
\[ \pr{G \le \frac{1}{2}|S_{\lambda+1}(x_{\lambda+1})|} \le \frac{1}{n^k}.\]

Using $\delta=\frac{2}{7}$ in Lemma~\ref{chernoff_bound} with $r=16$ we get,
\begin{align*}
    \pr{G \le \left(1 - \frac{2}{7}\right)\left(1-\frac{1}{e^2}\right)\frac{14}{17}|S_{\lambda+1}(x_{\lambda+1})} 
    &\le \exp\left({-\frac{4}{49\cdot 3}\left(1 - \frac{1}{e^2}\right)\frac{14}{17}|S_{\lambda+1}(x_{\lambda+1})|}\right)
    \\\text{Lemma \ref{lemma:swarm_size} then implies,}
    \\&\le \exp\left({-\frac{4}{49\cdot 3}\left(1 - \frac{1}{e^2}\right)\frac{14}{17}\frac{c\lambda}{2}}\right)
    \\&= \exp\left({-\frac{4}{357}\left(1 - \frac{1}{e^2}\right)c\lambda}\right)
    \\\text{for $c \ge 510\cdot k$ we get,}
    \\&\le n^{-k}.
\end{align*}
\end{itemize}
\end{proof}

We conclude the analysis by observing each node's congestion. 
Therefore, we first bound the expected 
number of trajectories that cross an interval in each round.
Note that a trajectory is defined on points in $[0,1)$
and \emph{not} on actual nodes (except the first and last element).
We can see that it holds:
\begin{lemma}
\label{lemma:interval_congestion}
Assume all nodes choose their position independently and uniformly at random in the $[0,1)$-interval.
Moreover, let each node send $\varphi \in \mathbb{Z}_+$ messages to targets picked independently and uniformly at random from the $[0,1)$-interval. Then for every $I \subset [0,1)$ it holds that,
\begin{enumerate}
\item $X^j_I$ is the sum of independent $\{0,1\}$ random variable,
\item $\E{X^j_I} = kn|I|$,
\end{enumerate}
where $X_I^j$ is a random variable that counts the number of trajectories that have their $j^{\text{th}}$ step in the interval $I$ and $|I|$ denotes the size of the interval $I$. 
\end{lemma}
\begin{proof}
W.l.o.g. assume that $I := [a,b]$ with $0\le a \le b < 1$.

~

\textit{1}. Let $X^{j}_{I:(v,i)}$ be a $\{0,1\}$ random variable such that, $X^{j}_{I:(v,i)}=1$ if the trajectory of a message $i \in [\varphi]$ started by a node $v$ crosses the interval $I$ in its $j^{th}$ step and $X^{j}_{I:(v,i)}=0$, otherwise.
Then the number of messages with their $j^{th}$ step in the interval $I$ is given by,
\[ X_I^j = \sum_{v \in V} \sum_{i \leq \varphi} X^{j}_{I:(v,i)}. \]
Observe that a message's trajectory is uniquely defined 
by the starting node $v$ and the end point $p$. 
Since for each message the target node is chosen uniformly and independently at random from $[0,1)$-interval, we conclude that the set of random variables $\left(X^{j}_{I:(v,i)}\right)_{v \in V, i \in [\varphi]}$ are independent.

\textit{2}. We prove by induction that for each $j \in [\lambda+1]$ step along the trajectory, it holds that, \[\E{X^j_I} = kn|I|.\] 

~

Consider step $j=0$ i.e., the step where is messages are at their starting node at $x_0$. For each $v \in V$, let $X_v$ be a $\{0,1\}$ random variable such that, $X_v=1$ if node $v \in I$ and $X_v=0$, otherwise. Observe that the nodes pick their positions uniformly and independently at random in the $[0,1)$-interval. Therefore, \[\pr{X_v=1} = |I|.\] This implies, \[\E{\sum_{v \in V} X_v} = \sum_{v \in V}\pr{X_v=1} = n|I|.\] Since each node initiates $\varphi \in \mathbb{Z}_+$ messages to randomly picked targets in the $[0,1)$-interval, \[\E{X_I^0} = \E{\sum_{v \in V} \varphi\cdot X_v} = n\varphi|I|.\] 

Now assume the induction hypothesis holds for every $j \in [\lambda]$, this implies, \[\E{X_I^{\lambda}} = \varphi n|I|.\] Let $I_0 := I \cap [0,\nicefrac{1}{2})$ and $I_1 := I \cap [\nicefrac{1}{2},1)$ be the parts of $I$ that lie in the first and second half of $[0,1)$-interval, respectively. Then, \[ \E{X_I^{\lambda+1}} = \E{X^{\lambda+1}_{I_0}} + \E{X^{\lambda+1}_{I_1}}.\]
Observe that for each $i \in \{0,1\}$ the bit representation of each point on $I_i$ begins with $i$. Therefore for any message $m$ destined to a uniformly and independently picked target $p \in [0,1)$-interval, it's trajectory crosses interval $I_i$ in the $j^{\text{th}}$ step of the trajectory if and only if $p_j = i$, where $p_j$ is the $j^{\text{th}}$ most significant bit of $p$ in the binary representation. W.l.o.g. we only analyze $I_0$ to show that,
\[
    \E{X^{\lambda+1}_{I_0}} = \varphi n|I_0|.
\]
The proof for $\E{X^{\lambda+1}_{I_1}} = \varphi n|I_1|$ is analogous and follows using similar arguments.

~

Consider an arbitrary trajectory $\tau := x_0, \dots, x_{\lambda+1} \in [0,1)^{\lambda+2}$  with $x_{\lambda+1} \in I_0 := [a,b_0]$. From Definition~\ref{def:trajecory} it must then hold that $x_{\lambda} = 2x_{\lambda+1}$. Therefore, for all trajectories that cross the interval $I_0 \in [0,\nicefrac{1}{2})$ it holds that $x_\lambda = 2x_{\lambda+1}$ and hence, the $\lambda^\text{th}$ step must be in the interval $J := [2a, 2b_0]$. The size of the interval is then, \[|J| = 2|I_0|.\]
By the induction hypothesis we know that, \[\E{X_J^\lambda} = \varphi n|J| = 2kn|I_0|.\]

Note that since for every message it's target is picked uniformly and independently at random from the $[0,1)$-interval, this is equivalent to the thought experiment of flipping a fair coin for each bit of the target address. Therefore, the probability that a trajectory $\tau$ in interval $J$ points towards the interval $I_0$ in it's $(\lambda+1)^\text{th}$ step is then, \[\pr{\tau_{\lambda+1} \in I_0} = \frac{1}{2},\] where $\tau_{\lambda+1}$ denotes the position of the trajectory $\tau$ in step $\lambda+1$.

Then, the expected number of trajectories that point from the interval $J$ to the interval $I_0$ is given by, 

\begin{align*}
    \E{X^{\lambda+1}_{I_0}} &= \sum_{\ell=0}^\infty \E{X^{\lambda+1}_{I_0} | X^{\lambda}_{J}=\ell}\cdot \pr{X^{\lambda}_{J}=\ell} & \textit{(Law of Total Expectation)}\\
    &= \sum_{\ell=0}^\infty \sum_{\tau \in \left[1, \ldots, \ell \right]}\pr{\tau_{\lambda+1} \in I_0} \cdot\pr{X^{\lambda}_{J}=\ell} \\
    &= \sum_{\ell=0}^\infty \frac{\ell}{2}\cdot \pr{X^{\lambda}_{J}=\ell}\\
    &= \frac{1}{2} \sum_{\ell=0}^\infty \ell \cdot \pr{X^{\lambda}_{J}=\ell}\\
    &= \frac{1}{2}\E{X^{\lambda}_{J}} = kn|I_0|.
\end{align*}

Since $|I| = |I_0| + |I_1|$ we get, \[
    \E{X^{\lambda+1}_{I}} = \varphi n|I_0| + \varphi n|I_1| = \varphi n|I|.
\]
This completes the induction.
\end{proof}

~

Using Lemma~\ref{lemma:interval_congestion} we can then bound from above the expected congestion for \routing.
\begin{lemma}
\label{lemma:interval_congestion2}
If each node picks its position and the $t \in \mathbb{Z}_+$ target nodes uniformly and independently at random, then \routing has a
expected congestion at most $24rtc\lambda$.
\end{lemma}
\begin{proof}
Let $v \in V$ be any node and let $\left[v \pm \frac{c\lambda}{n}\right] =: I_v \subset [0,1)$ be an interval that contains all points $p$ with $v \in S(p)$.
Observe that a message may be routed via $v$ only if its trajectory passes the interval $I_v$.
From Lemma \ref{lemma:interval_congestion} we know that for any given round $j$ the expected number of trajectories that cross the interval $I_v$ is then, \[\E{X^{j}_{I_v}} = 2tc\lambda,\] where $X_{I_v}^j$ is a random variable that counts the number of trajectories that have their $j^{\text{th}}$ step in the interval $I_v$. Using the Chernoff bound we get,
\begin{align*}
\pr{X^{j}_{I_v} \ge 3t c\lambda} &\le \exp\left(-\frac{1}{6}tc\lambda\right)
\\\text{for $c \ge 510\cdot k$,}
\\&\le n^{-k}.
\end{align*}
Therefore, the total number of trajectories that pass interval $I_v$ in any round $j$ is at most $3tc\lambda$ w.h.p. We know from Definition~\ref{def:trajecory}, that a trajectory passing interval $I_v$ in step $j$, had it's step $j-1$ in either interval $J^0 := \left[ 2\left(v \pm \frac{c\lambda}{n}\right)\right]$ with $p_{j} = 0$, or in the interval $J^1 := \left[ 2\left(v \pm \frac{c\lambda}{n}\right)-1\right]$ with $p_{j}=1$. Observe that the size of these intervals i.e., $|J^0| = |J^1| = 2|I_v|$. From Lemma~\ref{lemma:swarm_size} we know that the number of nodes in each of these interval i.e., $J^0$ and $J^1$, are at most $8c\lambda$ w.h.p. Therefore, the total number of messages that will be forwarded to the interval $I_v$ in any given round is then at most $M = (r\cdot 8c\lambda) \cdot (3tc\lambda)$ w.h.p.

Let $(X_i^v)_{i \in [M]}$ be a set of $\{0,1\}$ random variables such that, $X_i^v=1$ if message $m_i$ is sent to node $v$ and $X_i^v=0$, otherwise. From Lemma~\ref{lemma:swarm_size} we know that any interval of size $\frac{2c\lambda}{n}$ has at least $c\lambda$ nodes w.h.p. Then, \[\pr{X_i^v = 1} \le \frac{1}{c\lambda}.\] 

The expected congestion is then,
\[\E{\sum_{i \in [M]} X_i^v} \le
\frac{1}{c\lambda}(r\cdot 8c\lambda) \cdot (3tc\lambda) = 24rtc\lambda.\]

~

\end{proof}

Theorem \ref{thm:routing} follows directly from Lemmas \ref{lemma:routing_correctness}, \ref{lemma:routing_whp} and \ref{lemma:interval_congestion2}. 
\end{proof}

\subsection{The Random Sampling Algorithm}

\begin{figure}
\begin{algorithm}[caption={Random Sampling \sampling}, label={alg:sampling}]
$\textbf{Desc:}$ This algorithm is executed on a routable graph $\mathcal{D} := (D_1, H_1, \dots)$. It routes a message $m$ from any node $u \in V_t$ to a node $v$ (almost) uniformly picked from $V_{t+2\lambda+2}$.  

$\textbf{Send a message } $m$ \textbf{ to a random node}$ $v \in V_{t+2\lambda+2}$
  $p \longleftarrow$ Uniformly chosen from $[0,1)$
  $\Delta \longleftarrow$ Uniformly chosen from $[0,2c\lambda]$
  Route message $(m,p,\Delta)$ to target $S(p)$ using $\text{\routing}$
    
$\textbf{Upon receiving}$ $(m,p,\Delta)$ $\textbf{from}$ $\text{\routing}$
  $P \longleftarrow \{w \in S(p) \mid p_w \in S(p)\}$
  Choose $w$ such that $|\{u\,| \, u \in \langle p,w \rangle\}| =  \Delta$ mod $P$ 
  Deliver $m$ to $w$

\end{algorithm}
\end{figure}
Besides routing to a random swarm $S(p)$ for some $p \in [0,1)$ the algorithm \routing can also be extended to send a message to a \emph{node} (and not swarm) chosen \emph{uniformly} at random. 
We call this algorithm \sampling.
The underlying approach is adapted from King and Saia~\cite{KingS04} and King et al.~\cite{KingLSY07}.
In their algorithm, King and Saia condition on the fact that the hash function that provides the nodes with their positions has certain properties.
In particular, these properties are fulfilled with probability at least $1-\frac{3}{n}$ on a randomly chosen hash function $h$.
This is not good enough for our case as we need a sampling routine that works correctly w.h.p., i.e., with probability $1-\frac{1}{n^k}$ where we can freely choose some $k>1$.
We strongly believe that their approach can be adapted to work, w.h.p.,
however we will only show a weaker statement that suffices for our needs.
In particular, we only need the sampling probabilities to be within a constant factor, i.e., between, $\frac{1}{4n}$ and $\frac{5}{n}$.
Our approach works as as follows. A node first picks a value $p \in [0,1)$ uniformly at random and routes the message to the swarm $S(p)$ using algorithm \routing.
Then, the message is \emph{only} delivered to some randomly chosen node $w \in S(p)$.
For this, the message includes a random chosen number $\Delta \in [0,2c\lambda]$.
The message is then delivered to the node for which it holds $|\{u\,| \, u \in \langle p,w \rangle\}| = \Delta \textit{ mod } |S(p)|$, i.e., the $\Delta^{th}$ node (in $S(p)$) that is right of $p$.
Since all nodes in $S(p)$ know $|S(p)|$ (because they know the IDs of all nodes in $S(p)$) and $\Delta$ (because they all received the message) this can checked locally without further messages.
The following lemma bounds the sampling probability.
\begin{lemma}
\label{lemma:sampling}
Let $\mathcal{D}$ be routable. Assume, a node $v \in V_t$ starts a message $m_v$ using \sampling.
Then for all $u \in V_{t + 2\lambda+2}$, 
\[ \Pr[u \textit{ receives } m_v] \in \left[\frac{1}{4n},\frac{5}{n}\right].\]
Furthermore, for any two nodes $v,w \in V_t$ that start messages $m_v$ and $m_w$, respectively, it holds that for all $u \in V_{t + 2\lambda+2}$,
\[ \Pr[u \textit{ receives } m_v] = \Pr[u \textit{ receives } m_w]. \]
\end{lemma}
\begin{proof}
Let $Y(v,u)$ be a $\{0,1\}$ random variable such that $Y(v,u)=1$ in the event that $v$ samples $u$ i.e., the message $m_v$ is delivered to $u$.
Further, let $p \in [0,1)$ be the point that $v$ chooses in line $4$ of \sampling. 
Then, we make the following observations.
\begin{enumerate}
    \item As a necessary condition to sample the node $u$, node $v$ must pick some point $p \in S(u)$. Otherwise, $u$ will never be considered in the second step.
    This happens with probability $\frac{2c\lambda}{n}$.
    \item Given that $p \in S(u)$, \sampling still needs to pick $u$ uniformly from all nodes in $S(p)$. The probability for this depends only on $|S(p)|$ and $\Delta$.
    By Lemma \ref{lemma:swarm_size} we know that w.h.p. it contains at most $2c\lambda$ nodes and at least $c\lambda/2$ nodes. 
    Since $\Delta \leq 2c\lambda$ and $\frac{c}{2}\lambda \leq |S_l(p)| \leq 2c\lambda$ there are at least one and at most $4$ choices of $\Delta$ that result in $u$ being picked.
\end{enumerate}
These observations are sufficient to prove the statement. 
For the lower bound we get that:
\begin{align*}
    \pr{Y(v,u)} &= \pr{p \in S^L(u)} \cdot \pr{\text{u is picked from }S^R(p)}\\
    &\geq \pr{p \in S^L(u)} \cdot \left(\pr{|S^R(p)| \leq 2c\lambda}\frac{1}{2c\lambda}+\pr{|S^R(p)| > 2c\lambda}\frac{1}{n}\right)\\
    &= \frac{c\lambda}{n} \cdot \left( \left(1-\frac{1}{n^k}\right)\frac{1}{2c\lambda}+\frac{1}{n^{k+1}}\right)\\
    &\ge \frac{1}{2n}  - \frac{1}{2n^{k+1}}\\
    &\geq \frac{1}{4n}.
\end{align*}
The proof for the upper bound is analogous, we simply need to replace the upper bound for the swarm size with the lower bound.
We get
\begin{align*}
    \pr{Y(v,u)} &= \pr{p \in S^L(u)} \cdot \pr{\text{u is picked from }S^R(p)}\\
    &\leq \pr{p \in S^L(u)} \cdot \left(\pr{|S^R(p)| \geq \frac{c}{2}\lambda}\frac{1}{c\lambda}+\pr{|S^R(p)| \leq \frac{c}{2}\lambda}\cdot 1\right)\\
    &= \frac{2c\lambda}{n} \cdot \left(\left(1-\frac{1}{n^k}\right)\frac{1}{c\lambda}+\frac{1}{n^{k}}\right)\\
    &\le \frac{c\lambda}{n} \cdot \left(\frac{4}{c\lambda}+\frac{1}{n^{c}}\right)\\
    &= \frac{4}{n} + \frac{2c\lambda}{n^{k+1}}\\
    &\leq \frac{5}{n}.
\end{align*}
This concludes the first part of the lemma.

It remains to show that (messages of) nodes $v$ and $w$ are delivered to a node $u$ with the same probability.
Let $p_v$ and $p_{w} \in [0,1)$ be the points picked by these nodes for their respective messages. 
Further, let $\Delta_v$ and $\Delta_w \in [0,2c\lambda]$ be random numbers.
Let $\ell(u,p) :=  |\{w\,| \, w \in \langle p,u \rangle\}|$ and $\mathcal{U}(u,p)$ be the number of possible choices of $\Delta$ that lead to $u$ being picked given that the message is routed to $p$ in the first step.
Note that any point $p \in [0,1)$ and any $\Delta \in [0,2c\lambda]$ is picked with equal probability by $v$ and $w$.
Therefore, it holds that,
\begin{align*}
\pr{Y(v,u)} &= \pr{p_v \in S^L(u)} \cdot \pr{\text{u is picked from }S^R(p_v)}\\
    &= \pr{p_v \in S^L(u)} \cdot \left(\sum_{p^{*} \in S^L(u)} \pr{p_v=p^*}\cdot\pr{\ell(u,p^*) = \Delta_v \textit{ mod }|S^R(p^*)|}\right)\\
    &= \pr{p_v \in S^L(u)} \cdot \left(\sum_{p^{*} \in S^L(u)} \pr{p_v=p^*}\frac{\mathcal{U}(p^*)}{|S^R(p^*)|}\right)\\
    &= \pr{p_{w} \in S^L(u)} \cdot \left(\sum_{p^{*} \in S^L(u)} \pr{p_{w}=p^*}\frac{\mathcal{U}(p^*)}{|S^R(p^*)|}\right)\\
    &= \pr{p_w \in S^L(u)} \cdot \left(\sum_{p^{*} \in S^L(u)} \pr{p_w=p^*}\cdot\pr{\ell(u,p^
    *) = \Delta_w \textit{ mod }|S^R(p^*)|}\right)\\
    &= \pr{p_{w} \in S^L(u)} \cdot \pr{\text{u is picked from }S^R(p_w)}\\
    &= \pr{Y(w,u)}.
\end{align*}

The third equality is due to the fact that, \[\pr{p_v \in S^L(u)} = \pr{p_w \in S^L(u)} = \frac{c\lambda}{n},\] and for any $p \in 
[0,1),$\[\pr{p_v = p} = \pr{p_w = p}.\]
This follows from the fact that all nodes use the same random hash function $h$ to compute the positions.
This concludes the proof.
\end{proof}

\section{The Maintenance Algorithm}
\label{sec:maintainer}

In this section we present our main contribution.
Our algorithm maintains a \emph{routable} dynamic overlay $\mathcal{D} = (D_0, H_0, D_1, \dots)$ with high probability.
Before we present the algorithm, we will first give an overview on our assumptions and choice of churn parameters.

We assume a $(2,2\lambda+7)$-late adversary with a churn rate  $(\nicefrac{n}{16}, 2\lambda+7)$.
This implies that within $2\lambda+7$ (i.e. $O(\log n)$) rounds, a constant fraction of nodes can be subjected to churn.
Further, we assume that the number of nodes in any round is at most $(1+\nicefrac{1}{16})n$.
Note that the values $\alpha = \nicefrac{1}{16}$ and $\kappa = (1 + \nicefrac{1}{16})$ are chosen for the sake of convenience in analysis.
We require a bootstrap phase of length $B := 2\lambda+2$ at the beginning of the algorithm.
In this phase no churn occurs and this enables us to initialize our algorithm. 
Such a bootstrap phase is a standard assumption in the area of churn-resistant overlays and is very likely necessary to construct a robust overlay.

We also assume that the system starts in an initial LDS $D_0$ in round $0$.
This assumption is made for convenience as the initial overlay can easily be constructed 
in the churn-free bootstrap phase using algorithms from \cite{GmyrHSS17,GoetteHSW20}. 
Using their techniques this can be achieved in $O(\log^2 n)$
rounds with a deterministic algorithm or in $O(\log n)$ rounds w.h.p. with a randomized algorithm. 
Both these algorithms assume that the congestion and degree of each node is polylogarithmic, so they fit well into our computational model.
We would like to remark that since our focus lies on fast reconfiguration and not on optimizing the bootstrap phase we omit the algorithmic details and the corresponding analysis. 
For ease of notation we will refer to round $t+B$ simply as $t$. 

Let $\overline{V_t} := V_t \cap V_{t-1}$ denote the set of all rounds except for the newly joined nodes in any given round $t$.
Over the course of this section we distinguish between three types of nodes in each round $t$. 
Namely, the set of mature nodes $M_t \subseteq \overline{V}_t$, which are nodes that are in the network for at least $2\lambda+2$ rounds (or $2\lambda+3$ if they joined in an odd round), 
the set of fresh nodes $F_t := \overline{V}_t\setminus M_t$ which are nodes that are at least one round, but less than $2\lambda +2$ rounds old, and the set of newly joined nodes i.e., $V_t\setminus \overline{V}_t$. 
Observe that $\overline{V}_t := M_t \cup F_t$ and due to our choice of churn parameters, it holds $M_t \geq n(1-\frac{1}{16})$ and $F_t \leq \nicefrac{n}{16}$. 

Our algorithm is comprised of two subroutines, \maintainer and \random, that are executed in concurrently.
\maintainer ensures that all mature nodes build a routable overlay each round and \random makes sure that all fresh nodes (which are not part of the routable overlay) stay connected to $O(\log n)$ mature nodes until they mature themselves. This ensures that the matures node can route a message on behalf of the fresh nodes over the overlay.
The main result of this section (and this paper) is stated in the following theorem.
\begin{theorem}
Subroutines \maintainer and \random maintain a series of overlays $\mathcal{D}$ such that for $O(n^k)$ rounds w.h.p,
\begin{enumerate}
\item the mature nodes form a routable series of graphs $\mathcal{D} := (D_0,H_0,D_1, \dots)$, 
\item each fresh node is known by $\Theta(\log n)$ mature nodes, and
\item the congestion is $O(\log^3 n)$ per node and round.
\end{enumerate}
\end{theorem}

We would like to remark that both \maintainer and \random are heavily randomized and can possibly fail to create a connected and routable overlay if they are executed for too long. 
For example, the algorithm could fail if certain swarms are too small and/or too many messages are dropped by the routing algorithm.
In these cases, the algorithm cannot construct the desired overlays and needs to be restarted along with another bootstrap phase.
Given that the algorithm runs correctly \emph{w.h.p.}, i.e. no failure happens with probability $O(n^{-k})$ for a tunable constant $k>0$, we can only guarantee that the algorithm runs smoothly for $O(n^k)$ rounds w.h.p. until some failure happens.
This follows by a simple application of the union bound.
Throughout this chapter we assume that $\frac{k\log(n)}{n} << 1$, i.e., both $k$ and $\log(n)$ are very small compared to $n$. Therefore, our algorithms only become applicable for large values of $n$ (say $n > 10^6k$).
However, as our goal is to show that the messages per node stays logarithmic in $n$ even under heavy churn, we believe that it is justified to only consider very high values of $n$. 
In particular, we do not seek (nor claim) that we make the optimal choice of constants.

\maintainer and \random exchange four types of messages between the nodes to build a series of overlay.
\begin{enumerate}
    \item The message $\textsc{Connect}(v)$ is sent by a fresh node $v$ to advertise itself to a mature node in the overlay. It only contains $v$'s identifier.
    \item The message $\textsc{Create}(v,p^t_v)$ is used to introduce a node $v$ to its neighbours in $S_t(p^t_v)$.
    The message contains the node's ID and its position $p^t_v$ in the overlay $D_t \in [0,1)$.
    \item The message $\textsc{Join}(v,p^t_v)$ is used to introduce a node $v$ to nodes in $S_{t-1}(p^t_v)$.
    It is routed from its origin to position $p^t_v$ in overlay $D_t$. It contains the identifier of $v$ and its position $p_v^{t} \in [0,1)$.
    \item The message $\textsc{Token}(v)$ is sent by a mature node $v$ to a point in the $[0,1)$-interval picked (almost) uniformly at random. It only contains $v$'s identifier.
\end{enumerate}

\subsubsection*{Building a Routable Overlay}
After the bootstrap phase, the algorithm \maintainer creates a series of overlays $\mathcal{D}=\left(D_0, H_0, D_1, H_1, \dots \right)$ that contain all \emph{mature} nodes in any given round. 
In particular, in every \emph{even} round $t = 2i$ the algorithm creates an LDS $D_i$ which consists of all mature nodes $\overline{V}_{t}$. 
In each \emph{odd} round $t = 2i+1$ the algorithm creates a handover graph $H_i$ in which for each $p \in [0,1)$ it holds that $S_{i}(p)$ and $S_{i+1}(p)$ are adjacent (Lemma~\ref{lemma:swarm_propery}), where $S_{i}(p)$ and $S_{i+1}(p)$ are the swarms of point $p$ in $D_i$ and $D_{i+1}$, respectively.

To construct a series of overlays, all mature nodes continuously choose new positions $p^0_v, p^1_v, \dots$ for the corresponding overlays $D_0, D_1, \dots$ and use the routing algorithm to find their neighbors.
More precisely, the construction of $D_i$ begins in round $2i-(2\lambda+2)$. Every node $v \in \overline{V}_{2(i-\lambda-1)}$
picks a position $p^{i}_v \in [0,1]$ uniformly at random and routes its $ID$ to the target $p^{i}_v$ along the trajectory $(v, x_1, \cdots, x_\lambda, p^{i}_v)$.
This message arrives $2\lambda$ rounds later, i.e., in round $2i-2$ at the swarm $S_{i-1}(x_\lambda)$.
Then, within two rounds, \maintainer first constructs the handover graph $H_{i-1}$ and a new LDS $D_i$ based on these positions. 
This ensures that in the even rounds the forwarding step from \routing can be performed and the handover in the odd rounds. 
Thus, \maintainer maintains a routable overlay.
\begin{figure}
\begin{algorithm}[caption={Overlay Maintenance Algorithm \maintainer}, label={alg:maintainer}]
$\textbf{Desc:}$ In every even round $2t$ the algorithm creates a LDS $D_t$ consisting of all nodes that joined the network before round $2t - (\lambda+2)$ and performs the Forwarding Step from $\text{\routing}$. In every odd round $2t+1$ the algorithm performs a handover from $D_t$ to $D_{t+1}$ using a helper graph $H_t$.

$\textbf{Note:}$ The following code is executed by each node $u \in M_{t}$ every even and odd $\text{round}$ respectively. The messages are handled in the given order. The last block of commands in each phase is executed after all messages have been handled. 

$\rule[1pt]{6.2cm}{1pt}$ $\textbf{Even Round}$ $\rule[1pt]{6.2cm}{1pt}$
$\textbf{Upon receiving}$ $CREATE(v,p^t_v)$ $\textbf{from}$ $u'$
  $D^u_t \longleftarrow D^u_t \cup \{(v,p^t_v)\}$ $\hspace*{2.3cm} \vartriangleright$ $u$ creates edges to these nodes

$\textbf{Upon receiving}$ $JOIN(v,p^{t+1}_v)$ $\textbf{from}$ $\text{\routing}$
  Send $JOIN(v,p^{t+1}_v)$ to all nodes $w \in D^u_t$ with $p^t_w \in \langle p_v^{t+1} \pm \frac{2c\lambda}{n} \rangle \cup \langle \frac{p_v^{t+1}}{2} \pm \frac{3c}{2n}\lambda \rangle \cup \langle \frac{p_v^t+1}{2} \pm \frac{3c}{2n}\lambda \rangle$

$\textbf{Finally}$
  Perform $\emph{Forwarding Step}$ from $\text{\routing}$ using edges created from $D_{t}$
  $C \longleftarrow $ All fresh nodes known by $u$ (provided trough $\text{\random}$)
  $\forall v \in C \cup \{u\}$ do:
    $p_v \longleftarrow$ $h(v,t)$
    Route message $JOIN(v,p^{t+\lambda+1}_v)$ to target $p^{t+\lambda+1}_v$ using Algorithm $\text{\routing}$
    
$\rule[1pt]{6.2cm}{1pt}$ $\textbf{Odd Round}$ $\rule[1pt]{6.2cm}{1pt}$
$\textbf{Upon receiving}$ $JOIN(v,p_v)$ from a node $u'$
  $H_{t} \longleftarrow H_t \cup \{(v,p_v)\}$ $\hspace*{2.3cm} \vartriangleright$ $u$ creates edges to these nodes
        	
$\textbf{Finally}$
  Perform $\emph{Handover Step}$ from $\text{\routing}$ using edges created from $H_{t}$
  $\forall (v,p^{t+1}_v) \in H_{t+1}$ do:
    Send $CREATE(v,p^{t+1}_v)$ to all nodes $(w,p^{t+1}_w) \in H_{t}$ with $p^{t+1}_w \in \langle p_v^{t+1} \pm \frac{2c\lambda}{n} \rangle \cup \langle \frac{p_v^{t+1}}{2} \pm \frac{3c}{2n}\lambda \rangle \cup \langle \frac{p_v^t+1}{2} \pm \frac{3c}{2n}\lambda \rangle$
   
\end{algorithm}
\end{figure}

The algorithm proceeds in rounds. In any given round, every mature node picks a random position $p \in [0,1)$ and routes its $ID$ using message $\textsc{Join}(ID, p)$ to the respective target points using \routing. 
This is the position the node will occupy in $D_i$ in round $2i$.

It uses the routing algorithm \routing to send its ID to the swarms $S_{i-1}(p)$, $S_{i-1}(\nicefrac{p}{2})$, and $S_{i-1}(\nicefrac{p+1}{2})$ thereby creating the handover graph $H_{i-1}$ in round $2i-1$. 
Starting from the handover graph $H_{i-1}$, the overlay $D_i$ can then be created in a single additional round through local introductions.

We will now describe the construction of the overlays $H_{i-1}$ and $D_{i}$ in detail given that the algorithm worked correctly until that round.
We assume the system is currently in an \emph{even} round $t=2i-2$ and in all previous rounds the mature nodes formed a routable overlay $\mathcal{D} := D_0, H_0, \dots, D_{i-1}$.
In other words, all messages of the from $(v,p^i_v)$ where $v$ is an identifier and $p^i_v$ is its position in $D_i$, are one round away from reaching their target.
Then, the construction of $H_{i-1}$ and $D_i$ is as follows.

\begin{enumerate}
    \item In round $2i-2$ \routing executes the forwarding step. This implies, all messages are routed to their target location.
    In particular, each message with target point $p$ will be received by all nodes in $S_{i-1}(p)$.
    Recall that this is ensured by the fact that the message is sent to all nodes in $S_{i-1}(p)$ in the last step of the trajectory. Additionally, note that \maintainer ensures that the message is also forwarded to all nodes in $\langle p \pm \frac{2c\lambda}{n}\rangle$, $\langle\frac{p}{2} \pm \frac{3c\lambda}{2n}\rangle$, and $\langle\frac{p+1}{2} \pm \frac{3c\lambda}{2n}\rangle$. 
  
    \item 
    In round $2i-1$, each node $w \in \langle p^i_v \pm \frac{2c\lambda}{n}\rangle \cup \langle\frac{p^i_v}{2} \pm \frac{3c\lambda}{2n}\rangle \cup \langle\frac{p^i_v+1}{2} \pm \frac{3c\lambda}{2n}\rangle$ receives messages of the form $\textsc{Join}(v,p^{i}_v)$.
    Therefore, the construction of the handover graph $H_{i-1}$ follows directly from the correctness of \routing and Definition~\ref{def:handover_graph}.
    In the remainder of round $2i-1$ two routines are processed concurrently. 
    First, the handover step of \routing is executed. 
    Second, all nodes must learn their neighbors in $D_i$ in order to execute the forwarding step in round $2i$. 
    For this, the nodes iterate over all received messages of the form $(v,p^{i}_v)$ and \emph{introduces} them to all their neighbors.
    By introduction, we mean that the neighbor's $ID$ and position is sent to $v$.
    As we will see, for every pair of neighbors in $D_i$ there is at least one node that knows both their IDs  and introduces them.
    These messages arrive in round $2i$.
    \item Finally, at the beginning of round $2i$, each node knows all its neighbors in new overlay $D_{i}$ (and all messages that need to be forwarded in $D_i$). 
    Thus, all mature nodes form the overlay $D_{i}$ (and can perform the forwarding step). 
\end{enumerate}
Note that after round $2i+1$ the nodes' positions in $D_i$ and $D_{i+1}$ are in no relation with each other and hence, the edges in $D_{i+1}$ are independent of $D_i$.
Therefore, the adversary stays oblivious of all nodes' current positions.

Further, observe that our approach requires that \emph{both} the fresh and the mature nodes send out the join requests and that all messages take \emph{exactly} the same time to reach its destination.
The latter is ensured through \routing.
For the former we assume that each fresh node is known by least one mature node, which is part of $D_i$. 
However, this will be ensured by \random and explained in the next section. 

Listing \ref{alg:maintainer} presents the pseudocode for the algorithm.
Each node has the two variables $D^u_t$ and $H^u_t$. $D^u_t$ stores $u$'s neighborhood in $D_t$ whereas $H^u_t$ stores the references for the handover. Both variables may be reset at the end of each round. 
The nodes pick a random position in the $[0,1)$-interval using a uniform hash function $h: V \times \mathbb{N} \to [0,1)$ known to all nodes. 
This hash function can either be established in the bootstrap phase by sending $O(\log^2 n)$ bits of shared randomness to each node or we can assume it is known to all nodes. 
The former case, it would have to be renewned every $O(\log (n))$ as the adversary could gain access to it.
One way to do this would be to transform the routing algorithm into a broadcast algorithm that can send a given message to all nodes. For brevity we omit this and simply assume that all nodes know $h$ and the adversary does not have access to it.

This hash function takes the node's $ID$ and the current round as an input and computes a random value $p_v$. 
Note that this choice excludes some points in $[0,1)$ from being picked as all values need to be encoded in $O(\log n)$ bits.
However, this does not impact the correctness of our algorithms.
Instead, we just handle the values returned by $h$ as continuous values as it is a standard assumption (see, e.g. \cite{KingS04, KingLSY07} for the usage in overlay networks and \cite{bellare1993random} for a proof that these functions can indeed be simulated by few random bits).

Recall that the fresh nodes are not part of the overlay. Therefore, the mature nodes send out requests on behalf of each fresh node $u \in F_t$ known to them. Note that each node can compute $h(u,t)$ if it knows $u$'s ID. The $ID$s of these nodes are stored in the variable $C$.
This variable is set by \random. 
Details on how it set can be found in the next section.

\subsection{Analysis of \maintainer}
\label{sec:maintainer_analysis}
In this section we show that \maintainer maintains a dynamic overlay
with the properties needed for routing. 
Throughout this section we assume that \random works correctly and each fresh 
node is connected to $\Theta(\log n)$ mature nodes at any time. 
Thus, \emph{every} node in the networks starts a join request in every even round.
\begin{lemma}
\label{lemma:correctness_lds}
Let $\mathcal{D}_t := (D_0, H_0 \dots, D_i)$ be routable graph until round $t=2i$.
Then it holds $\mathcal{D}_{t+2} := (D_0, H_0 \dots, D_i,H_i,D_{i+1})$ is a routable graph until round $t+2$ w.h.p.
\end{lemma}
\begin{proof}
W.lo.g. we assume that the algorithm is currently in round $t = 2i$ 
and the overlays $\mathcal{D}_t := (D_0, H_0 \dots, D_i)$ were routable.
This implies that the mature nodes know all neighbors in $D_i$, all join requests $(v,p_v^{i+1})$ started $2\lambda$ rounds ago are delivered, and the nodes are ready to perform the final forwarding step of the messages. Due to Lemma~\ref{lemma:good_nodes} we know that $D_i$ is good, and at least $\nicefrac{3}{4}$-fraction of each swarm in $D_i$ survives until $2i+1$. We will now show that \maintainer maintains the following three properties.
\begin{enumerate}
\item \maintainer successfully constructs $H_i$ in round $2i+1$,
\item constructs a new LDS $D_{i+1}$ in round $2i+2$, and
\item all swarms $S_{t+2}(p)$ are good \emph{w.h.p}.
\end{enumerate}
Together, these three properties imply that $\mathcal{D}_{t+2} := (D_0, H_0 \dots, D_i,H_i,D_{i+1})$ are a series of routable overlay.

~

The following lemma shows that \maintainer constructs $H_{i}$ in round $2i+1$, i.e., we show that for any $p \in [0,1)$, every node in $S_{2i}(p)$ knows the ID of every node in $S_{2i+1}(p)$. The proof essentially follows using correctness of \routing and Lemma~\ref{lemma:swarm_propery}.
\begin{lemma}[Correctness of the Handover Construction]
\label{lemma:spread}
Let $\mathcal{D}_t := (D_1, H_1 \dots, D_i)$ be routable graph until round $t=2i$.
Then, in round $2i+1$, each node in $\langle p^{i+1}_v \pm \frac{2c\lambda}{n}\rangle \cup \langle\frac{p^{i+1}_v}{2} \pm \frac{3c\lambda}{2n}\rangle \cup \langle\frac{p^{i+1}_v+1}{2} \pm \frac{3c\lambda}{2n}\rangle$ receives $(v,p^{i+1}_v)$ w.h.p.
This implies that in round $2i+1$ the nodes form the Handover graph $H_{i}$.
\end{lemma}
\begin{proof}
The proof follows directly using correctness of \routing and the overlay's topology. 
Since $\mathcal{D}_t$ is routable until (and including) round $2i$, all messages that were started in round $2i-2\lambda$ are correctly routed to their target swarm w.h.p. via \routing.
This includes all \textsc{Join}($v,p^{i+1}$) messages that were started in round $2i-2\lambda$. Due to Lemma~\ref{lemma:good_nodes} we know that $D_t$ is good w.h.p.
Thus, in round $2i+1$ \emph{every} node in $\langle p^{i+1}_v \pm \frac{2c\lambda}{n}\rangle \cup \langle\frac{p^{i+1}_v}{2} \pm \frac{3c\lambda}{2n}\rangle \cup \langle\frac{p^{i+1}_v+1}{2} \pm \frac{3c\lambda}{2n}\rangle$ received \textsc{Join}($v,p^{i+1}$) and therefore knows $v$ w.h.p. The proof then follows from the definition of the Handover graph $H_i$.
\end{proof}


~

We continue with the construction of $D_{i+1}$. 
In particular, we show that every \emph{mature} node $v \in S_{i+1}(p^{i+1}_v)$ creates an edge to each of its new neighbors in $v' \in S_{i+1}(p^{i+1}_v) \cup S_{i+1}(\frac{p^{i+1}_v}{2}) \cup S_{i+1}(\frac{p^{i+1}_v+1}{2})$. We divide the neighbors into two sets.
\begin{enumerate}
\item the list neighbors left and right of $p^{i+1}_v$, and
\item the DeBruijn neighbors left and right of $\frac{p^{i+1}_v}{2}$ and $\frac{p^{i+1}_v+1}{2}$. 
\end{enumerate}
The following lemma show that for all nodes $v$ and $v'$ which will be neighbors in $D_{i+1}$, \emph{w.h.p.} there is at least one node $w$ that receives the messages $(v,p^{t+1}_v)$ and $(v',p^{t+1}_{v'})$ in round $2i+1$ and thus introduces the nodes.
\begin{lemma}
\label{lemma:overlap}
Let $v,w$ be any two neighbors in $D_{i+1}$, then w.h.p. 
\[|\{ u \in G_{2i+1}| u \textit{ receives } (v, p^{i+1}_v) \textit{ and } (w,p^{i+1}_{w}) \}| \geq \frac{14}{17}c\lambda,\]
where $G_{2i+1} \subseteq V_{2i+1}$ are the set of good nodes in the round $2i+1$.
\end{lemma}
\begin{proof}
Consider two nodes $v$ and $w$ with $d(p^{i+1}_v,p^{i+1}_w) \leq \frac{2c\lambda}{n}$, i.e., neighbors in $D_{i+1}$.
W.l.o.g. we assume that $p^{i+1}_w$ is right of $p^{i+1}_v$ and $d(p^{i+1}_v, p^{i+1}_w) = \frac{2c\lambda}{n}$. We make these simplifying assumptions since (a) the proof is analogous for the left and right side and (b) any closer point can only have more nodes for the introduction.

Observe that the last step of \routing is executed in round $2i$. Particularly, \maintainer ensures that the message $\join(v,p_v^{i+1})$ and $\join(w,p_w^{i+1})$ is forwards to every node in the interval $\left[ p^{i+1}_v, p^{i+1}_v + \frac{2c\lambda}{n}\right]$ and $\left[ p^{i+1}_w, p^{i+1}_w - \frac{2c\lambda}{n}\right]$, respectively. This is possible due to the topology of $D_i$. This implies there is a interval $I \in [0,1)$ of length $\frac{2c\lambda}{n}$ such that all nodes belonging to this interval receive both $\join(v,p_v^{i+1})$ and $\join(w,p_w^{i+1})$ in round $2i+1$. The claim then follows using Lemma~\ref{lemma:swarm_size} that interval $I$ has at least $c\lambda$ nodes w.h.p. and Lemma~\ref{lemma:good_nodes} that at least $\frac{14}{17}$ of those nodes in the interval $I$ are good nodes and remain until round $2i+2$ w.h.p.
\end{proof}

The next lemma shows that a $(2, 2\lambda+7)$-late adversary effectively reduces the adversarial churn to a randomized churn as the adversary is oblivious to which nodes belongs to which swarm in any given round.
\begin{lemma}\label{lemma:oblivious_lds}
A $(2,2\lambda + 5)$-late adversary enables \maintainer construct $D_{i+1}$ independent of $D_i$. 
\end{lemma}
\begin{proof}
The proof follows from the correctness of \routing and \maintainer. Recall that in every even round, each mature node in $v \in M_{2i-(2\lambda+2)}$ picks a position $p_v^{i}$ in $D_{i}$ for itself and also for the fresh nodes that are connected to them. This is the position the node $v$ occupies in round $2i$ i.e., LDS $D_i$. $\join(v, p_v^i)$ is routed using \routing and arrives at $p_v^i$ in round $2i-1$ i.e., $H_{i-1}$ and then within a round \maintainer constructs $D_i$. Therefore, a $(2, 2\lambda+5)$-late adversary is oblivious to the position of node $v$ until round $2i+2$. However, observe that \maintainer ensures that the node is at the position $p_v^{i+1}$ picked uniformly at random from $[0,1)$-interval in round $2i+2$. This implies that the position of node $v$ in $D_{i+1}$ is independent of $D_i$. 
\end{proof}


~

Finally, Lemma \ref{lemma:correctness_lds} follows from Lemma~\ref{lemma:spread},~\ref{lemma:overlap}, and~\ref{lemma:oblivious_lds}. This concludes the analysis of the maintenance algorithm.
\end{proof}
\subsubsection*{Handling New and Fresh Nodes}
\begin{figure}
\begin{algorithm}[caption={Random Overlay Algorithm \random}, label={alg:random}]
$\textbf{Desc:}$ In each round $t$ each fresh node connects to $\delta$ mature nodes that joined at least $t-(2\lambda+5)$ rounds ago.

$\textbf{Note:}$ The following code is executed by each node $u \in V$ every round $t$. All types of messages are received in the given order. The last block of commands is executed after all messages have been handled.

$\rule[1pt]{6cm}{1pt}$ $\emph{Round }t$ $\rule[1pt]{6cm}{1pt}$
$\textbf{Upon receiving}$ $TOKEN(v)$ from node $u'$:
  $T \longleftarrow T \cup \{v\}$ $\hspace*{3.25cm} \vartriangleright$ Tokens ready to be used

$\textbf{Upon receiving}$ $CONNECT(v)$ from node $v$:
  if $\exists i \in [0,{2\delta}]$ with $c_i = \bot$ 
    $i \longleftarrow$ number chosen uniformly from all $i \in [0,{2\delta}]$ with $c_i = \bot$
    $c_i \longleftarrow v$
    
$\textbf{Upon}$ $v$ $\textbf{joining}$:
  $(w_1, \dots, w_\delta) \longleftarrow$ $\delta$ tokens chosen u.i.r from $T$
  Send $CONNECT(v)$ to all $w_1, \dots, w_\delta$ $\hspace*{1cm} \vartriangleright$ $u$ sends on behalf of $v$
  $(w_1, \dots, w_\delta) \longleftarrow$ $\delta$ tokens chosen u.i.r from $T$
  Send $TOKEN(w_1), \dots, TOKEN(w_\delta)$ to $v$ $\hspace*{.77cm} \vartriangleright$ Supply $v$ with tokens
  
$\textbf{Upon receiving}$ $TOKEN(v)$ through Algorithm $\text{\sampling}$:
  $x \longleftarrow $ uniformly chosen from $\{0,1\}$
  if $x = 0$:
    $c_i \longleftarrow$ random element from $c_i, \dots, c_{2\delta}$ 
    Send $TOKEN(v)$ to $c_i$ (or discard if $c_i = \bot$)
  else if $u$:
    $T \longleftarrow T \cup \{v\}$ $\hspace*{3.05cm} \vartriangleright$ Tokens ready to be used

$\textbf{Finally}$:
  if $u$ is fresh:
    $(w_1, \dots, w_\delta) \longleftarrow$ $\delta$ tokens chosen u.i.r from $T$
    Send $CONNECT(u)$ to all $w_1, \dots, w_\delta$ 
  else if $u$ is mature:
    Send $TOKEN(v)$ to $\tau$ random nodes using Algorithm $\text{\sampling}$
    
  $(c_1, \dots, c_{2\delta}) \longleftarrow (\bot, \dots, \bot)$ $\hspace*{3.52cm} \vartriangleright$ Reset $ID$s
  $T \longleftarrow \emptyset$ $\hspace*{4.5cm} \vartriangleright$ Drop unused tokens
\end{algorithm}
\vspace*{-0.8cm}
\end{figure}
We now present \random in detail. This algorithm ensures that each fresh node is known by $\delta \in \mathcal{O}(\log n)$ randomly chosen \textit{good} mature nodes each round \emph{w.h.p.}
Algorithm is executed in rounds on nodes set $F_t$ and $M_t$ corresponding to the fresh and mature nodes of round $t$, respectively. Recall that every fresh node joins the network via a node which has been in the network for at least two rounds. This enables the bootstrapping node to update the newly joined node with IDs of $O(\log (n))$ mature nodes and also advertise the ID of the newly joined node to $O(\log (n))$ mature nodes in the overlay in the subsequent round.
Each fresh node $f \in F_t$ which is at least one round old, advertises its own ID to $O(\log (n))$ mature nodes in the overlay.
Every such unique advertisement a mature node receives, is associated with a unique key in $[0, O(\log (n))]$ and stored in its memory.
Each mature node $m \in M_t$, uniformly and independently at random samples $O(\log (n))$ other mature nodes in the overlay using ALG-SAMPLING. Each sampled ID of a matured node is either sent to a newly joined node (i.e. less than a round old and bootstrapped via $m$) with probability $p = 1/2$ or is with probability $1-p$ forwarded to the ID of a fresh node, if available, whose key is picked uniformly at random from $[0, O(\log(n))]$.

Note that at the end of round $t$ a node forgets all its incoming connections from fresh nodes and the assignment of numbers to $ID$s is reset.
Last, note that the bootstrap phase ends once the first tokens reach their target.

Listing \ref{alg:random} depicts the pseudocode for \random. We use two types of messages, $TOKEN(v)$ and $CONNECT(v)$. Both messages only contain a nodes $v$'s $ID$. The former is
used to spread the mature nodes' $ID$s, the latter is used send a fresh node's $ID$ for sampling. Note that all token that are ready to create an are stored in the variable $T$. Further, the array $(c_1, \dots, c_{2\delta})$ stores the assignment of numbers to $ID$s. It holds $c_i = v$ if $v$'s $ID$ is assigned to $i$. If no $ID$ is assigned to $i$ we write $c_i = \bot$. Note that the set $C$ mentioned in Listing \ref{alg:maintainer} consists of all $c_i \neq \bot$.
Last, note that a node can distinguish whether it received a $TOKEN(v)$ message through Algorithm \sampling, i.e., in step $1$ of the sampling process sketched above, or directly from a node, i.e., in step $3$. 

\subsection{Analysis of \random}
In this section we show that every fresh node is able is send its $ID$ to $\delta$ mature nodes each round \emph{w.h.p.} and thus stays connected to the network. In particular, we assume that $\delta \in O(\log n)$.
Therefore, we prove the following lemma.
\begin{lemma}[Random Overlay Lemma]
\label{lemma:random-overlay}
Assume that until round $t-1$ each fresh node was connected to at least $\frac{1}{2}\delta$ good nodes each round. Then, it holds w.h.p. that each $v \in F_t$ successfully connects to $\frac{1}{2}\delta$ good nodes.
\end{lemma}
We prove the lemma in several steps. 
First, we show that each node receives $\Omega(\tau)$ tokens $\emph{w.h.p}$.
To prove this we make use of a simple balls-into-bins argument.
Recall that each mature node in starts $\tau$ tokens in round $t-(2\lambda+5)$ that reach
their random destination in round $t$. 
Further, we can show that the tokens are uniformly distributed among all nodes.
\begin{lemma}
\label{lemma:equal2}
Assume Lemma \ref{lemma:random-overlay} held until round $t-1$. 
Further, let $X(\theta,v)$ denote the event that any token $\theta$ reaches $v$ in round $t$.
Then the following statements hold:
\begin{enumerate}
\item Any token reaches $v \in V_t$ with the same probability, i.e., 
\[
    \pr{X(\theta,v) = 1} = \pr{X(\theta',v) = 1}. 
\]
\item For each token $\theta$, it holds
\[
    \pr{X(\theta,v) = 1} \geq \frac{1}{32n}.
\]
\end{enumerate}
\end{lemma}
\begin{proof}
\begin{enumerate}
    \item We extend Lemma \ref{lemma:sampling} to fresh nodes and show all token reach a node $v \in V_t$ with the same (but not necessarily uniform) probability. 
Consider a token $\theta$ independently of its source node and let $X(\theta,u)$ indicate that $\theta$ reaches $u$.
Further, let $V'_{t-1} \subset M_{t-1}$ be the set of all nodes that know $v$'s $ID$.
If $v$ receives $\theta$ in round $t$, 
then the following two events must happen
\begin{enumerate}
\item The token must be sent to any mature node $v' \in V_{t-1}$ using \sampling. 
We denote this event as $X_1(\theta,v')$.
\item Given any $v' \in V'_{t-1}$ received $\theta$, 
it must forwarded is to $v$ in round $t-1$. 
We denote this event as $X^{v'}_2(\theta, v)$.  
\end{enumerate}
We can easily show that both these events have the same probability for two tokens of possibly different origin.
The uniformity of the first event directly follows from Lemma \ref{lemma:sampling}.
Here, we showed that $\pr{X_1(w,v')} = \pr{X_1(u,v')}$ for every $u,w \in V_{t-(2\lambda+5)}$ and $v \in V_t$. Note that for two different $v',v'' \in V_t$ the probabilities $\pr{X_1(w,v')}$ and $\pr{X_1(w,v')}$ may differ.
The uniformity of the second event follows from the fact that each token is forwarded to $v$ with probability of exactly $\frac{1}{4\delta}$. 
To finalize the proof, consider two nodes $u,w \in V_{t-\lambda}$ and let 
$\theta^u$ and $\theta^w$ be tokens sent by $v$ and $w$ respectively. Then,
\begin{align*}
    \pr{X(\theta^u,v)=1} &=\sum_{v' \in V'_{t-1}}\pr{X_1(\theta^u,v')=1 \cap  X^{v^\prime}_2(\theta^u,v)=1}\\
    &= \sum_{v' \in V'_{t-1}}\pr{X_1(\theta^u,v')=1}\cdot \pr{X^{v^\prime}_2(\theta^u,v)=1 \mid X_1(\theta^u,v')=1}\\
    &= \sum_{v' \in V'_{t-1}}\pr{X_1(\theta^w,v')=1}\cdot \pr{X^{v^\prime}_2(\theta^w,v)=1 \mid X_1(\theta^w,v')=1}\\
    &= \pr{X(\theta^w,v)=1}.
\end{align*}
Here, the first equality is due to the law of total probability and second equality is due to Lemma~\ref{lemma:sampling} and the fact that each mature node $v^\prime$ forwards a token to a fresh node with probability $\frac{1}{4\delta}$. 
\item 
The fact that $\pr{X(v,w) = 1} \in \Omega(\frac{1}{n})$ then follows from three facts: 
\begin{enumerate}
    \item First, a token reaches a given mature node with probability at least $\frac{1}{4n}$. This follows directly from Lemma \ref{lemma:sampling}. 
    \item  Second, each fresh node is connected to at least $\frac{\delta}{2}$ mature nodes w.h.p. This follows because we assume that Lemma \ref{lemma:random-overlay} holds true in round $t-1$.
    \item Last, a mature node forwards a token to a connected node with probability $\frac{1}{4\delta}$.
\end{enumerate}
Combining these three facts yields the result. Formally:
\begin{align*}
    \pr{X(\theta,v)=1 \mid\big|V'_{t-1}\big| \geq \nicefrac{\delta}{2}} &=\sum_{v' \in V'_{t-1}}\pr{X_1(\theta,v')=1}\cdot \pr{X^{v^\prime}_2(\theta,v)=1 \mid X_1(\theta,v')=1}\\
    &\geq \sum_{v' \in V'_{t-1}}\frac{1}{4n}\frac{1}{4\delta} 
    \\&\geq \frac{\delta}{2} \frac{1}{4n}\frac{1}{4\delta}\\
    &\geq \frac{1}{32n}.
\end{align*}

\end{enumerate}
\end{proof}
\begin{lemma}
\label{lemma:equal3}
Assume Lemma \ref{lemma:random-overlay} held until round $t-1$. 
Further, let $X(u,v)$ denote the event that any token sent by $u$ reaches $v$ in round $t$.
Then the following statements hold:
\begin{enumerate}
\item Each node sends at least one token to node $v \in V_t$ with the same probability, i.e., $\forall u,w \in V_{t-(2\lambda+5)}$
\[
    \pr{X(u,v) = 1} = \pr{X(w,v) = 1}. 
\]
\item For each $u \in V_{t-(2\lambda+2)}$ it holds
\[
    \pr{X(u,v) = 1} \geq \frac{\tau}{33n}.
\]
\end{enumerate}
\end{lemma}
\begin{proof}
\begin{enumerate}
    \item Now, we  observe the variables $X(u,v)$ and $X(w,v)$ that denotes if any of $u$'s or $w$'s tokens reached $v$.
Recall that both $u$ and $w$ send $\tau$ tokens.
We denote these tokes as $\theta^u_1 , \ldots, \theta_\tau^u$ and $\theta^u_1 , \ldots, \theta_\tau^u$.
Let $X(\theta, v)$ be defined as in Lemma~\ref{lemma:equal2}. The probability that any of these tokens reach $v$ is given by:
\[
    \pr{\bigcup_{i = 1, \ldots, \tau} X(\theta^u_i,v) = 1} = 1-\pr{\bigcap_{i=1, \ldots, \tau}  X(\theta^u_i,v)=0}
\]
Since all these tokens are independent, it holds that:
\[
     \pr{\bigcap_{i = 1, \ldots, \tau}  X(\theta^u_i,v)=0} = \prod_{i = 1, \ldots, \tau} \left(1-\pr{X(\theta^u_i,v)=1}\right)
\]
The same holds respectively for $\pr{\bigcap_{i = 1, \ldots, \tau} X(\theta^w_i,v) = 0}$.
Putting these observations together, we get that:
\begin{align*}
    \pr{X(u,v) = 1} &= \pr{\bigcup_{i = 1, \ldots, \tau} X(\theta^u_i,v) = 1} = 1-\pr{\bigcap_{i=1, \ldots, \tau}  X(\theta^u_i,v)=0}\\
    &= 1-\prod_{i=1, \ldots, \tau}\left(1-\pr{X(\theta^u_i,v)=1}\right)\\
    &= 1-\prod_{i=1, \ldots, \tau}\left(1-\pr{X(\theta^w_i,v)=1}\right)\\
    &= 1-\pr{\bigcap_{i=1, \ldots, \tau}  X(\theta^w_i,v)=0} = \pr{\bigcup_{i = 1, \ldots, \tau} X(\theta^w_i,v) = 1}\\
    &= \pr{X(w,v) = 1}.
\end{align*}
This was to be shown.

\item 
For any pair of $v$ and $u$, the probability is lower bounded by 
\begin{align*}
    \pr{X(u,v)=1} &= 1 - \pr{X(u,v)=0}\\
    &= 1 - \prod_{i = 1, \ldots, \tau} \left(1-\pr{X(\theta_i^u,v)=1}\right)\\
    &= 1 - \left(1-\frac{1}{32n}\right)^{\tau}\\
    & \geq 1-\exp\left(-\frac{\tau}{32n}\right)\\
    & \geq 1-\left(1-\frac{\tau}{32n} + \left(\frac{\tau}{32n}\right)^2\right)\\
    & \geq \frac{\tau}{32n} - \left(\frac{\tau}{32n}\right)^2\\
    & \geq \frac{\tau}{33n}\\
\end{align*}
where for the second inequality we use the fact that for all $x \leq 1$,
\[\exp(x) \leq 1+x+x^2.\]
The last inequality follows from fact that $\tau \in O(\log n)$ and thus $\frac{\tau}{32n}$ can be made arbitrarily small for a big enough $n$. 
\end{enumerate}
\end{proof}

Together with our assumptions on the churn rate, we get the following lemma:

\begin{lemma}
\label{lemma:distinct_token}
Let $c \geq 280k$.
Let each mature node start $\tau \geq 20c\lambda$ token, then each fresh node receives at least $\frac{\tau}{100}$ distinct token with probability at least $1-\frac{1}{n^k}$.
\end{lemma}
\begin{proof}
Recall that at least $\frac{15}{16}n$ mature nodes in round $t-(2\lambda+5)$ that start $\tau$ tokens each.
Hence, the minimal number of nodes that start tokens is at least $K :=\frac{15 n}{16}$.
Further, there are at most $\frac{17}{16}n$ nodes in round $t$.
Fix a node $v$ and let $X_1, \ldots, X_{K}$ be the indicator variables that a nodes has a token that reaches $v$.
Then the expected number of distinct tokens received by node $v$ is given by,
\begin{align*}
    E\left[\sum_{i=1}^{K} X_i\right] \geq \sum_{i=1}^{K} \frac{\tau}{33n} \geq \frac{15\tau n}{16\cdot33\cdot n} \geq \frac{\tau}{50},
\end{align*}
where we use $\pr{X_i = 1} = \frac{\tau}{33n}$ due to Lemma~\ref{lemma:equal3}.
Given that all mature nodes send their tokens independent of one another, the Chernoff Bound is applicable and the lemma follows for a big enough $c$. 
In particular, it holds for $c \geq 280k$:
\begin{align*}
    \pr{X \leq \frac{\tau}{100}} &= \pr{X \leq (1-\nicefrac{1}{2})\frac{\tau}{50}}\\
    &\leq \exp\left(-\frac{\tau}{4 \cdot 50 \cdot 3}\right) \\
    &\leq \exp\left(-k\lambda\right) = n^{-k}.
\end{align*}
\end{proof}

This basically tells us that - as long as we choose $\tau$ bigger than $20c\lambda$ - each node will receive roughly $\Omega(\tau)$ distinct tokens w.h.p, which it can then use to advertise itself and the new nodes connected to it.

Next, we need to consider, how big we need to choose $\tau$ such that each node has enough tokens to ensure that it is able to connect to $\frac{\delta}{2}$ mature nodes.
Lemma~\ref{lemma:distinct_token} gives us that choosing $\tau$ such that $\frac{\tau}{100} \ge (\phi+1)\cdot \delta$, where $\phi = O(1)$ is the maximum number of nodes that could join via a fresh node in any given round, then each fresh node has enough tokens to advertise itself to $\delta$ distinct mature nodes every round and also provide for the newly joined nodes.

In the following we can assume that each fresh node sends a connection request to $\delta$ nodes.
However, these requests can still fail for two reasons: 
\begin{enumerate}
    \item First, the $ID$ of the token used for the connections belongs to a node that has been churned out.
    \item Second, the the target has received more than $2\delta$ connection requests and refuses the connection.
\end{enumerate}
The first factor depends on the number of nodes have been churned out and on the numbers of connections we make. 
The second term only depends on the random process that creates these edges.

We begin by showing that only a small fraction of connection request are sent to churned out nodes. In the following lemma we say node $v$ is good in round $t$ if and only if $v \in V_{t-(2\lambda + 5)} \cap V_{t+2}$ and referred to as bad, otherwise. 

\begin{lemma}
\label{lemma:bad_tokens}
Suppose that $\tau \geq 26000k\lambda$ and $\delta\geq 60k\lambda$, then
each fresh node has at least $\frac{\delta}{2}$ connections to good nodes with probability at least $1-\frac{1}{n^k}$.
\end{lemma}
\begin{proof}
The proof of this lemma is straightforward and mostly technical. 
The basic outline is as follows:
Due to its lateness the adversary cannot anticipate where a node will send its tokens. Thus, the tokens of good and bad nodes will \emph{randomly} spread to the fresh nodes.
As will see, in expectation each fresh node roughly receives a $\frac{15}{16}$ fraction of good nodes. Since the sampling is independent, this implies there is at least a $\frac{13}{18}$ fraction \emph{w.h.p} (for a big enough $\tau$) due to the Chernoff Bound.  
Since a fresh node randomly draws its connection without replacement there are also $\frac{13}{18}\delta$ successful connections in expectation. Since drawing without replacement is NA, another application of the Chernoff bound concludes the proof.

We will now prove these claims in more detail:
Fix a node $v \in V_t$ that advertises itself or a newly joined node.
Let $X_1, \dots, X_{\delta}$ be the binary RVs such that $X_j$ denotes if $j^{th}$ advertisement by $v$ is successful, i.e., its identifier is advertised to a good node.
The outcome of $X = \sum_i X_i$ depends on two values, the overall number of tokens that $v$ received and the number of identifiers of good nodes.

We have already established that the number of distinct tokens that a node receives can be subjected to the Chernoff Bound and is therefore concentrated around its expectation.
The same holds for the number of good identifiers.
Let $G$  be the number of good identifiers that $v$ draws from the set of available tokens.
One can easily verify that $G$ is the sum independent binary random variables: For each good node $w \in V_{t-(2\lambda+5) \cap V_{t+2}}$, let $G_w \in \{0,1\}$ be the indicator for the event that $w$ sends one token with its identifier to $v$. 
Then it holds $G := \sum_{w \in V_{t-(2\lambda+5)} \cap V_{t+2}} G_w$ and all $G_w$'s are independent. 

Recall that at least $n\left( 1 - \frac{1}{16}\right)$ and at most $n\left( 1 + \frac{1}{16}\right)$ nodes started tokens $2\lambda+5$ rounds ago.
Since at most a $\nicefrac{1}{16}$-fraction of all nodes that started a token are churned out until round $t+2$, it holds that $\left(\frac{15}{16}\right)^2n$ is a lower bound for the number of good tokens started $2\lambda+5$ rounds ago.

Let now $Y$ be the number of all distinct tokens that $v$ received.
Let $p \in [\frac{\tau}{33n},\frac{\tau}{n}]$ be the probability that at least token of a fixed node reaches $v$.
Then, it holds $E[Y] \le pn\left( 1 + \frac{1}{16}\right)$ and $E[G] \ge p\left(\frac{15}{16}\right)^2n$.
We will upper and lower bound $Y$ and $G$ respectively.
We start with $Y$. 
Assuming that $p n \geq 768k\lambda$ the Chernoff Bound gives us that,
\begin{align*}
    \pr{Y \ge \frac{9}{8} pn} &= \pr{Y \geq (1+\nicefrac{1}{17})(1+\nicefrac{1}{16})pn} 
    \leq \exp\left(-\frac{pn}{2 \cdot 17 \cdot 16}\right) 
    \leq \exp\left(-k\lambda\right) = n^{-k}.
    \label{equ:token} \tag{9}
\end{align*}
\begin{align*}
    \pr{G \leq \frac{13}{16}pn} \le \pr{G \le (1-\nicefrac{1}{15})\left(\frac{15}{16}\right)^2pn} &\leq \exp\left(-\frac{15^2\cdot pn}{3 \cdot 16^2 \cdot 15^2 }\right) \leq
    \exp\left(-k\lambda\right) = n^{-k}.
    \label{equ:good_token} \tag{10}
\end{align*}
Therefore, it remains to show that we can choose $pn$ big enough for these statements to hold.
Recall that $\frac{\tau}{n} \ge p \geq \frac{\tau}{33n}$,
then for $\tau \ge 26000 k\lambda$, we have that $pn \ge 768k\lambda$.

Now, we condition on (\ref{equ:token}) and (\ref{equ:good_token}) being false and denote this event as $\mathcal{G}$.
In this case, a simple calculation reveals that at least a $\frac{13}{18}$-fraction of tokens is good. 
If we pick $\delta' = \min\{\delta,Y\}$ of these tokens uniformly at random without replacement, a constant fraction will point to good nodes in expectation:
\begin{align*}
    E[X \mid \mathcal{G}] 
   &\ge \sum_{g=\frac{13}{16}pn}^{n\left(1 + \frac{1}{16}\right)}\sum_{y=0}^{\frac{9}{8}pn} \pr{G=g, Y=y \mid \mathcal{G}} \cdot \delta' \frac{g}{y}\\
   &\geq \sum_{g=\frac{13}{16}pn}^{n\left(1 + \frac{1}{16}\right)}\sum_{y=0}^{\frac{9}{8}pn} \pr{G=g, Y=y \mid \mathcal{G}} \cdot \delta'\frac{\frac{13}{16}pn}{\frac{9}{8}pn}\\ 
   &= \delta'\frac{\frac{13}{16}pn}{\frac{9}{8}pn} \sum_{g=\frac{13}{16}pn}^{n\left(1 + \frac{1}{16}\right)}\sum_{y=0}^{\frac{9}{8}pn} \pr{G=g, Y=y \mid \mathcal{G}} \\ 
    &= \delta'\frac{\frac{13}{16}pn}{\frac{9}{8}pn}\\ 
    &= \delta'\frac{13}{18}.
\end{align*}
We show that under these circumstances at least half of all advertisements go to good nodes for a big enough $\delta'$.
Note that we observe a hyper-geometric distribution, which is known to be NA (cf. \cite{dubhashi1998balls}).
Thus, by the Chernoff Bound, a constant fraction points to living nodes \emph{w.h.p.} if we choose $\delta'$ high enough.
In particular, by choosing $\delta \geq 60k\lambda$
\begin{align*}
    \pr{\sum_{i=0}^{\delta} X_i \leq \frac{1}{2}\delta \mid \mathcal{G}} &\leq \pr{\sum_{i=0}^{\delta} X_i \leq (1-\nicefrac{5}{18})\frac{13}{18}\delta}\\
    &\leq \exp\left(-\frac{5^2\cdot13\delta}{3 \cdot 18^3}\right) \leq \exp\left(-\frac{\delta}{60}\right) & \rhd\textit{Using }\frac{5^2 \cdot 13}{3\cdot 18^3} \geq \frac{1}{60}\\
    &\leq \exp\left(-k\lambda\right) = n^{-k}. & \rhd\textit{for }\delta \geq 60k\lambda\\
\end{align*}
Note that since $\mathcal{G}$ holds w.h.p, we have that 
\begin{align*}
    \pr{\sum_{i=0}^{\delta} X_i \leq \frac{1}{2}\delta} &= \pr{\mathcal{G}}\cdot \pr{\sum_{i=0}^{\delta} X_i \leq \frac{1}{2}\delta \mid \mathcal{G}} + \pr{\neg \mathcal{G}}\cdot \pr{\sum_{i=0}^{\delta} X_i \leq \frac{1}{2}\delta \mid \neg\mathcal{G}}\\
    &\leq \pr{\mathcal{G}}\cdot \pr{\sum_{i=0}^{\delta} X_i \leq \frac{1}{2}\delta \mid \mathcal{G}} + \pr{\neg \mathcal{G}} \cdot 1\\
    &= \pr{\mathcal{G}}\cdot \pr{\sum_{i=0}^{\delta} X_i \leq \frac{1}{2}\delta \mid \mathcal{G}} + \pr{\left(Y \geq (1+\nicefrac{1}{15})pn\right) \cup \left(|G| \leq \frac{14}{16}pn\right)}\\
    &\leq \left(1-\frac{1}{n^k}\right)\frac{1}{n^k} + \frac{2}{n^k} \leq \frac{3}{n^k}. 
\end{align*}
Thus, the statement holds w.h.p.
Finally, note that our choice of $\tau$ already implies that we receive that least $\delta$ tokens.
\end{proof}
\begin{lemma}
\label{lemma:oblivous_random}
A $(2, 2\lambda+7)$-late adversary enables \random ensure that every fresh node is connected to $\frac{\delta}{2}$ mature nodes in each round.
\end{lemma}
\begin{proof}
The proof follows using the correctness of \routing and \random.
Note that the adversary is oblivious of the random edges because they only persist for $2$ rounds. 
Each mature node disseminates tokens to random positions in the $[0,1)$ interval. The tokens arrive at their target node for being sampled after $2\lambda+2$ rounds. The mature nodes that receive the token forward them to fresh nodes which in turn connect to the mature nodes to stay connected in the network until they mature themselves. The fresh nodes then receive new tokens from these connections. The entire process takes $2\lambda+5$ rounds in total. Therefore, in any given round $t$ a $(2, 2\lambda+7)$ adversary is oblivious to any communication between the fresh nodes and mature nodes, since all connections established until round $t$ are already defunct, i.e., the adversary is unable to anticipate which tokens reach a given fresh node. This in turn enables \random maintain the invariant every round.
\end{proof}

Lemma~\ref{lemma:oblivous_random} and the churn parameters ensure that there exists a constant size set of good nodes $G_t := V_{t-(2\lambda + 5)} \cap V_{t+2}$ that send a token in round $t-(2\lambda + 5)$ and are not churned out until round $V_{t+2}$. Therefore, if a node receives enough tokens of good nodes, it can successfully advertise its identifier w.h.p.

It remains to show that at most $2\delta$ fresh nodes connect to a mature node in any given round. We first analyze the expected number of incoming connections.

\begin{lemma}\label{lemma:equal4}
Let $A(u,v)$ denote the event that $v$ advertises itself to $u$.
Then, $\forall u,w \in V_t$ and any two tokens $\theta,\theta'$ it holds that, 
\begin{align*}
    \pr{A(u,v)} = \pr{A(w,v)}. 
\end{align*}
\end{lemma}
\begin{proof}
We again divide the process into two stages.
First, the token of $u$ needs to reach $v$ and then needs to be picked for the advertisement.
Let $Y_1(u,v)$ and $Y_1(w,v)$ the respective events that tokens of $u$ and $w$ reached $v$.
By Lemma \ref{lemma:equal2} we already know that these events have the same probability.

Note that the actual choice of the nodes which are advertised only depends on the number of distinct available tokens.
In particular, given that a node received $\ell$ distinct tokens, the probability for one of these tokens to be used is $\min\{1, \frac{\delta}{\ell}\}$. 
This follows from the fact that we draw (up to) $\delta$ tokens uniformly at random or all tokens if we received less than $\delta$. Thus, we draw without replacements and observe a hyper geometric distribution.

Let now $N_v$ be the number of distinct tokens received by $v$.
Since all nodes send (at least) one token to $v$ independently and with same probability $p := \pr{Y_1(u,v)=1}$, the value of $N_v$ only depends on the number of nodes we observe.
More formally, it holds:
\begin{align*}
    \pr{N_v = \ell} &:= \sum_{S \subset M_t, |S| = \ell} \pr{\bigcap_{u \in S}Y_1(u,v)=1}\cdot \pr{\bigcap_{u' \not\in S}Y_1(u',v)=0}\\
    &:= \sum_{S \subset M_t, |S| = \ell} \prod_{u \in S}\pr{Y_1(u,v)=1}\cdot \prod_{u' \not\in S}\pr{Y_1(u',v)=0}\\
    &= \binom{|M_t|}{\ell}p^\ell(1-p)^{n-\ell}.
\end{align*}
Thus, if we condition on the fact that $v$ already received a token of a certain node, the probability that this node receives $\ell-1$ additional tokens from different nodes stays the same.
It holds,
\begin{align*}
    \pr{N_v = \ell \mid Y_1(u,v)=1} &=  \sum_{S \subset M_t\setminus\{u\}: |S| = \ell-1} \pr{\bigcap_{u^\prime \in S}Y_1(u^\prime,v)=1}\cdot \pr{\bigcap_{u^{\prime\prime} \not\in S}Y_1(u^{\prime\prime},v)=0}\\
    &= \binom{|M_t|-1}{\ell-1} \cdot p^{\ell-1} \cdot (1-p)^{|M_t|-(\ell-1)}\\
    &= \sum_{S \subset V\setminus\{w\}: |S| = \ell-1}\pr{\bigcap_{u^\prime \in S}Y_1(u^\prime,v)=1}\cdot \pr{\bigcap_{u^{\prime\prime} \not\in S}Y_1(u^{\prime\prime},v)=0}\\
    &= \pr{N_v = \ell \mid Y_1(w,v)=1}.
\end{align*}
Thus, when summing over all different outcomes, we get,
\begin{align*}
    \pr{A(u,v)=1} &= \pr{Y_1(u,v)=1} \cdot \pr{A(u,v) \mid Y_1(u,v)=1}\\
    &= \pr{Y_1(u,v)=1} \cdot \left(\sum_{\ell=1}^{|M_t|}\pr{N_v = \ell \mid Y_1(u,v)}\cdot \min\left\{1, \frac{\delta}{\ell}\right\}\right)\\
    &= \pr{Y_1(w,v)=1} \cdot \left(\sum_{\ell=1}^{|M_t|}\pr{N_v = \ell\mid Y_1(w,v)=1}\cdot \min\left\{1, \frac{\delta}{\ell}\right\}\right)\\
     &= \pr{Y_1(w,v)=1} \cdot \pr{A(w,v) \mid Y_1(w,v)=1}\\
    &= \pr{A(w,v)=1}.
\end{align*}
This was to be shown.
\end{proof}

Intuitively, this lemma implies that all node have the same probability of receiving an advertisement.
Thus, since there are at $\delta n$ advertisement, the expected number of incoming connections is bound by $\delta$.

\begin{lemma}\label{lemma_exp_adv}
Fix a mature node $w$ that started tokens $2\lambda+5$ rounds ago.
Let $X$ be a random variable that denotes the number of fresh nodes that advertise themselves to $w$.
It holds:
\begin{align*}
    \E{X} \leq \delta.
\end{align*}
\end{lemma}
\begin{proof}
Let $f := |F_t|$ be number of fresh nodes and $m = |M_t|$ the number of mature nodes in round $t$. Note that $\frac{f}{m}$ is at most $\frac{1}{8}$ due to 
our choice of $\alpha$ and $\kappa$.

For a fixed $v \in F_t$ and $w\in M_t$ let $A(v,w)$ be the binary RV
that denotes if $v$ connects to $w$. 
Let now $A_v :=\sum_{w \in M_t} A(v,w) $ be the random variable that counts $v$'s advertisements. 
Since each node creates at most $\delta$ advertisements, it must hold that:
\[
    \E{A_v} \leq \delta
\]
Further, we have that
\[
    \E{A_v} = \sum_{w \in V} E[A(v,w)]
\]
This follows from the linearity of expectation.
Given that $A(v,w)$ is a binary random variable, we also have that
\[
    \E{A(v,w)} = \pr{A(v,w)}
\]
Using Lemma \ref{lemma:equal4} we also know that $\pr{A(u,v)}=\pr{A(w,v)}$.
Now we combine our observations to bound $\pr{A(u,v)}$.
First, see that:
\begin{align*}
    \delta \geq \E{A_v} &= \sum_{w \in V} \E{A(v,w)} & \rhd \textit{By linearity of Expectation}\\
    &= \sum_{w \in V} \pr{A(v,w)} & \rhd \textit{As } \E{A(v,w)} = \pr{A(v,w)}\\
    &= m \cdot \pr{A(v,w)} & \rhd \textit{As } \pr{A(v,w)} = \pr{A(v,w')}\\
\end{align*}    
Therefore,
\begin{align*}
    \pr{A(v,w)} \leq \frac{\delta}{m}.
\end{align*}
Let $X$ be a random variable that counts the number of incoming connections to the node $w$. Then, 
\[
    \E{X} = \sum_{v \in F_t} \E{A(v,w)} \leq \frac{\delta}{8}.
\]
Thus, our claim holds.
\end{proof}
\begin{lemma}\label{lemma_maxadv}
Each mature node receives at most $2\delta$ connections from fresh nodes w.h.p.
\end{lemma}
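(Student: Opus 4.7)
The plan is to fix an arbitrary mature node $w$ and show that the random variable $X_w$ counting the CONNECT messages directed to $w$ in round $t$ is at most $2\delta$ w.h.p., then take a union bound over all mature nodes. Each CONNECT arriving at $w$ originates either from a fresh node $v \in F_t$ that draws $\delta$ tokens from its pool $T_v$ on its own behalf, or from a node $u$ that draws $\delta$ tokens from $T_u$ on behalf of a node joining via $u$ in round $t$; since by assumption only a constant number of nodes may join via the same $u$, each source contributes $\Theta(\delta)$ picks.

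The first step is to use Lemma \ref{lemma:equal2}. That lemma says each token reaches any given node with the same probability, so by the symmetry between mature nodes the origin (a mature node ID) of any token sitting in $T_v$ is uniformly distributed over $M_t$. Hence a uniformly random pick from $T_v$ targets the fixed node $w$ with probability $1/|M_t| = \Theta(1/n)$.

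The second step is a counting argument for the number of sources. Because the churn rate is $(\nicefrac{n}{16}, 4\lambda + 14)$ and fresh nodes remain fresh for $\lambda' = 2\lambda + 4$ rounds, we get $|F_t| + |J_t| \leq \gamma n$ for a small constant $\gamma$ inherited from $\alpha = \nicefrac{1}{16}$. Multiplying by $\Theta(\delta)$ picks per source and the $\Theta(1/n)$ hitting probability per pick, linearity of expectation yields $E[X_w] = O(\gamma \delta)$, which for the chosen constants can be made strictly below $\delta$.

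For concentration I would argue that the indicators $Y_{v,i}$ that the $i$-th pick of source $v$ lands on $w$ are negatively associated across distinct sources. The pools $T_v$ are populated by tokens that traveled through independent instances of \sampling and were retained or forwarded by independent fair coin flips and independent slot-assignments inside \random, so the mature node returned by each pick depends on a disjoint block of fresh randomness; the standard closure properties of NA distributions then transfer to the $Y_{v,i}$. A Chernoff bound yields $P[X_w \geq 2\delta] \leq e^{-\Omega(\delta)} \leq 1/n^{k+1}$ for $\delta \in \Theta(\log n)$ large enough, and a union bound over the at most $\kappa n$ mature nodes finishes the proof. The main obstacle I anticipate is making the independence argument rigorous, since tokens pass through shared sampling infrastructure before being repackaged into CONNECTs; I expect to handle this by explicitly exposing the random choices made by \sampling and by each forwarding flip in \random, and then checking that the picks that could possibly hit $w$ depend on disjoint subsets of those random bits.
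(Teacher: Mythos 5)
Your overall architecture (fix a mature node $w$, bound the expected number of incoming CONNECTs, establish a negative-dependence property strong enough for a Chernoff bound, then union-bound over $w$) is the same as the paper's, and your expectation bound is sound: the paper derives $E[X_v^w]\leq \delta/m$ from the symmetry $E[X_v^w]=E[X_v^{w'}]$ of Lemma \ref{lemma:equal2} together with $\sum_{w}E[X_v^w]\leq\delta$, which is equivalent to your ``uniform origin, probability $1/|M_t|$ per pick'' computation.

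The gap is in the concentration step. Your plan is to argue that the indicators $Y_{v,i}$ are NA because ``the mature node returned by each pick depends on a disjoint block of fresh randomness.'' That disjointness claim is false, and it is exactly the difficulty you flag at the end: all picks that could possibly hit $w$ depend on the \emph{same} randomness, namely where $w$'s $\tau$ tokens were routed and retained. The tokens of $w$ are a finite resource shared among all fresh nodes, so the events ``$v$ connects to $w$'' and ``$v'$ connects to $w$'' are genuinely (negatively) dependent: a token of $w$ that lands in $T_v$ cannot also land in $T_{v'}$. You cannot expose the randomness and partition it into per-source blocks; if you could, the variables would be independent, which they are not. The paper closes this with a two-stage argument that you would need to reproduce: condition on the outcome $Y$ of the token-distribution phase (for each fresh $v$, the indicator $R_v$ that $v$ holds a token of $w$ and the count $B_v$ of tokens $v$ did \emph{not} receive); show $Y$ is NA via the closure properties of balls-into-bins allocations; observe that, given $Y$, the picks of distinct fresh nodes are independent and that $E[X_v\mid R_v,B_v]=\frac{R_v}{1+(m-B_v)}$ is monotone in $(R_v,B_v)$; and then conclude $\mathrm{Cov}(f(X_A),g(X_B))\leq 0$ by the law of total covariance, the first term vanishing by conditional independence and the second being nonpositive because the conditional expectations are monotone functions of disjoint subsets of the NA family $Y$. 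Without this (or an equivalent negative-dependence argument), the Chernoff bound in your final step is not justified.
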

\begin{proof}
Fix a mature node $w \in M_t$ and let $X_{v} \in \{0,1\}$ be the random variable that indicates whether $v \in F_t$ connects to $w$. 
Further, let $X := \sum_{v \in F_t} X_v$ the sum of all fresh nodes that connect to $w$.

We will show that $(X_v)_{v \in F_t}$ is negatively correlated and then use the Chernoff Bound on $X := \sum_{v \in F_t} X_v$ to prove the lemma. 

Let $A$ and $B$ be two disjoint subsets of fresh nodes and $X_A$ and $X_B$ the respective subsets of variables that correspond to these nodes.
We show that,
\[
     \E{X_A \cdot X_B} \leq \E{X_A} \cdot \E{X_B}.
\]
A simple induction then implies that $(X)_{v \in F_t}$ is negatively correlated:
Pick any set $S \subset [1,n]$ and denote the indices in $S$ as $i_1, \dots, i_s$. 
Given that the formula above is true for any disjoint subset, we have:
\begin{align*}
    \E{\prod_{j = i_1, \ldots, i_s} X_j} &\leq \E{X_{i_1}} \cdot \E{\prod_{j = i_2, \ldots, i_s} X_j}\\
    &\leq \E{X_{i_1}} \cdot \E{X_{i_2}} \cdot \E{\prod_{j = i_3, \ldots, i_s} X_j}\\
    &\leq \hspace{2cm}\vdots\\
    &= \prod_{j = i_1, \ldots, i_s} \E{X_j}. 
\end{align*}
Here, the inequalities hold by choosing $A = \{i_1\}$ and $B=S \setminus A$ in the first line.

Thus, we will now show that
\begin{align*}
     \E{X_A \cdot X_B} \leq \E{X_A} \cdot \E{X_B}.
\end{align*}
Observe that,
\begin{align*}
     &\E{X_A X_B} \leq \E{X_A} \cdot \E{X_B}
     \\\iff &\E{X_A X_B} - \E{X_A} \cdot \E{X_B} \leq 0 \\
     \iff &Cov(X_A,X_B) \leq 0. 
\end{align*}
Thus, we will show that $Cov(X_A,X_B) \leq 0$.

If we condition on the set of tokens that each node receives, then $(X_v)_{v \in F_t}$ (and therefore $X_A$ and $X_B$) follows the hypergeometric distribution as we draw the received tokens without replacement.
Thus, intuitively for any fixed distribution of tokens to fresh nodes, the conditioned distribution is therefore negatively correlated.
To show this more formally, we need to to precisely condition $(X_v)_{v \in F_t}$ on the distribution of tokens.
For a simpler presentations denote all tokens sent by $w$ as \emph{red} tokens, all others are \emph{blue} in the following.
Let $R_v \in \{0,1\}$ the random variable that indicates that $v \in F_t$ received red token.
Further let $B_{v}$ count the number of distinct blue tokens \emph{not} received by $v$.
Thus, the set  $Y := \left(R_v \cup B_v\right)_{v\in F_t}$ completely characterizes the distribution of tokens to nodes.

Using the closure properties of NA, one can show that the set $Y := \left(R_v \cup B_v\right)_{v\in F_t}$ is NA: 
For each each red token $\theta$ we can define the binary variables $X(\theta, v)$.
It holds that $X(\theta,v) = 1$ if $\theta$ reaches $v$, and $0$ otherwise.
Observe that for a fixed $\theta$ there is at most one $X(\theta,v) = 1$ and all others are $0$ as a token can only be received by one node.
Thus, for each token $\theta$ the set $\left(X(\theta,v)\right)_{v \in V_t}$ is NA.
Further, if we fix a node $v$ the sum of all $X(\theta,v)$ is NA as well.
The variable $R_v$ is now simply defined as 
\begin{align*}
    R_v := \begin{cases}
      1 & \textit{if } \sum_{i=1}^\tau X(\theta_i^w,v) > 0\\
      0 & \textit{else }
    \end{cases}
\end{align*}
Since it is monotonically increasing in $\sum_{i=1}^\tau X(\theta_i^w,v)$ the variable $R_v$ is also NA.

The same holds almost analogously for the blue tokens, however we need to adapt to the fact that we count the identifiers that were not received.
Therefore, we define $\overline{X}(\theta,v) := 1 - X(\theta,v)$.
In particular, we set $\overline{X}(\theta,v) = 0$ if $\theta$ reaches $v$, and $1$ otherwise.
As $X(\theta,v)$ is NA, so is $\overline{X}(\theta,v)$ as it is monotone function.
For each node $u \in V\setminus\{w\}$ now define $B^u_v$ to be binary variable indicating if any token of $u$ reached $v$.
Thus, $B^u_v$ equals $1$ if all $\tau$ tokens of $u$ missed $v$. 
We formalize this by setting:
\begin{align*}
    B_v^u := \begin{cases}
      1 & \textit{if } \sum_{i=1}^\tau \overline{X}(\theta^u,v) = \tau\\
      0 & \textit{else }
    \end{cases}
\end{align*}
Again, $\sum_{i=1}^\tau \overline{X}(\theta^u_i,v)$ is NA and thus $B^u_v$ is increasing function of $\sum_{i=1}^\tau \overline{X}(\theta_i^u, v)$ it is NA, too.
By summing over all $B^u_v$, we obtain the number of distinct tokens that did not reach $v$. Further, the variables $(B_v)_{v \in F_t}$ with $B_v := \sum_{u \in V} B^u_v$ are NA as each variable is a monotone function on disjoint NA variables. 

Now observe that all $X_v$'s are independent of one another given $Y$.
Further, the expected value $E[X_v|R_v,B_v]$ 
of each $X_v$ is dependent solely on the variables $R_v$ and $B_v$. 
In particular, each $E[X_v|R_v,B_v]$ monotonically rises in both $R_v$ and $B_v$ and is given by,
\begin{align*}
    \E{X_v \mid Y = (R_v, B_v)} &= \pr{X_v \mid Y = (R_v, B_v)}
    \\&= \frac{R_v}{R_v + (m - B_v)},
\end{align*}
where $m = |V\setminus\{w\}|$ is the total number of nodes in round $t - (2\lambda + 5)$ that send blue tokens.

Obviously, receiving a red tokens raises the expectation of drawing a red token. However, it also holds that, the less blue token (i.e., more blue tokens that we don't receive) we receive the more likely it becomes to draw a red token.
Thus, given two disjoint subsets $X_A, X_B \subset X$ we can view $E[X_A|Y]$ and $E[X_B|Y]$ as functions that monotonically rise in disjoint subsets $Y_A = (R_A \cup B_A)$ and $Y_B := (R_B \cup B_B)$.
With these observations, we can now show that $X$ is negatively correlated using only some technical arguments.
By the law of total convariance, it holds: 
\[
	Cov\left((X_A,X_B\right) := \E{Cov(X_A,X_B)|Y} + Cov\left(\E{X_A|Y},\E{X_B|Y}\right) 
\]
We see that the first term is $0$ since $(X_v)_{v \in F_t}$ is independent given $Y$.
Note that the covariance of independent variables is always $0$ by definition. 
Thus, it holds:
\[
	Cov\left(X_A,X_B\right) = Cov\left(\E{X_A|Y},\E{X_B|Y}\right) 
\]
It remains to show that this term is smaller than $0$. 
First, recall that $E[X_A|Y]$ and $E[X_B|Y]$ are 
monotonically increasing in $Y_A$ and $Y_B$.
Thus, we can view them as monotonically increasing functions $f(Y_A) := \E{X_A|Y}$ and $g(Y_B) := \E{X_B|Y}$ for disjoint subsets of $Y$. 
Further, we showed $Y$ is NA and thus - by the closure properties of NA - its holds that $f(Y_A)$ and $f(Y_B)$ are NA, too.
Therefore, the fact that $Cov(f(Y_A),g(Y_B)) \leq 0$ follows from the definition of NA.
And thus: 
\[
Cov\left(\E{X_A|Y},\E{X_B|Y}\right) = Cov\left(f(Y_A),g(Y_B)\right) \leq 0.
\]

This implies that $(X_v)_{v \in F_t}$ is negatively correlated. 

Now, we need one last application of the Chernoff Bound.
Using Lemma~\ref{lemma_exp_adv}, we know that $E[X] \leq \delta$.
Now observe that, 
\[\pr{X > 2\delta} = \pr{X > 2\frac{\delta}{E[X]}E[X]}.\]
Note that $\frac{\delta}{E[X]} \geq 1$ and we can therefore use the third bound from Lemma \ref{chernoff_bound} for parameters bigger than $1$.
Thus, for $\delta \geq 3k\lambda$ we get that:
\begin{align*}
    \pr{X \geq 2\delta} &\leq \exp\left(-\frac{\delta}{3}\right)\\ 
    &\leq \exp\left(-k\lambda\right) = \frac{1}{n^k}.
\end{align*}
This shows the lemma as was claimed.
\end{proof}


\subsection{Congestion}
\begin{lemma}
\label{lemma:congestion}
Algorithms \maintainer and \random have congestion of $O(\log^3 n)$ per node and round w.h.p.
\end{lemma}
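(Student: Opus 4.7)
The approach is to bound, separately for each of \maintainer and \random and each phase inside them, the number of messages a fixed node $u$ sends or receives in a fixed round, and then union-bound over the polynomially many (node, round, phase) triples. Since the required budget is $\mathcal{O}(\log^3 n)$, the argument can afford a slack of one $\log$ factor in every sub-step.

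For \maintainer the two dominant sources of traffic at $u$ are the invocations of \routing on the JOIN-messages and the subsequent introduction/CREATE spreads inside swarms. Each mature node initiates one JOIN for itself and one for each of the $\Theta(\delta) = \Theta(\log n)$ fresh nodes in its set $C$; invoking the routing-congestion lemma with $k = \mathcal{O}(\log n)$ and summing over the $\mathcal{O}(\log n)$ routing-generations simultaneously in flight at a fixed round yields $\mathcal{O}(\log^2 n)$ messages per node from the routing itself. In the final-delivery round, $u$ lies in the target swarm of every JOIN whose destination $p_v^{t+1}$ falls in the interval $[p_u^t \pm c\lambda/n]$, which by a Chernoff bound on the uniform targets is $\mathcal{O}(\log n)$ w.h.p.; for each such JOIN the algorithm forwards to the $\mathcal{O}(\log n)$ $D_t$-neighbors of $u$ lying in the three prescribed position intervals, producing $\mathcal{O}(\log^2 n)$ sends. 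The incoming side is bounded symmetrically---for each JOIN that reaches $u$ at most $\mathcal{O}(\log n)$ senders may forward it---and a further Chernoff argument caps the total at $\mathcal{O}(\log^3 n)$. The CREATE phase in the odd round has the same structure and satisfies the same bound.

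For \random every mature node sends $\tau = \mathcal{O}(\log n)$ tokens through \sampling, which is \routing applied with $k = \tau$, again giving $\mathcal{O}(\log^2 n)$ by the routing-congestion lemma. By Lemma \ref{lemma:equal2} combined with Chernoff, $u$ receives $\Theta(\log n)$ tokens, each forwarded to at most one of its $2\delta$ registered slots---an additional $\mathcal{O}(\log n)$ sends. The CONNECT messages are addressed directly, and the earlier negative-association bound on $\sum_{v} X_v^w$ caps the number received at $u$ by $\mathcal{O}(\log n)$ w.h.p.; the bootstrap-supply step contributes $\mathcal{O}(\log n)$ messages per $\mathcal{O}(1)$ new nodes that join through $u$. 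Summing the contributions from both algorithms and all phases gives $\mathcal{O}(\log^3 n)$ per node per round w.h.p.

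The main obstacle is the concentration argument for the second-stage forwards in \maintainer. The indicators ``$v$'s JOIN lands in $u$'s swarm \emph{and} is then forwarded to a specific neighbor $w$'' are not independent across $v$ because they all depend on the same random positions, which are simultaneously consumed by concurrent routings and by the forwarding rules of the introduction step. Establishing the negative association needed to apply Chernoff requires the same kind of reasoning as in Lemmas \ref{lemma:interval} and \ref{lemma:bad_tokens}, combined with the adversary's obliviousness (Lemma \ref{lemma:oblivious}) so that each node's current position can be treated as a fresh uniform sample from the adversary's viewpoint. Once this is in hand, the phase-by-phase bounds combine by a union bound over the $\mathrm{poly}(n)$ events and the lemma follows.
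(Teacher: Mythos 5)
Your proposal is correct and follows essentially the same route as the paper: count the $\mathcal{O}(\log n)$ routing/sampling requests each mature node starts per round, invoke the congestion bound for \routing over the $\mathcal{O}(\log n)$ generations simultaneously in flight, and add the $\mathcal{O}(\log^2 n)$ introduction/CREATE messages. One small arithmetic note: applying the routing-congestion lemma with $k=\mathcal{O}(\log n)$ gives $\mathcal{O}(\log^2 n)$ \emph{per generation}, so summing over the $\mathcal{O}(\log n)$ in-flight generations yields $\mathcal{O}(\log^3 n)$ from the routing alone (not $\mathcal{O}(\log^2 n)$ as you state), which is still within the claimed budget and is exactly how the paper reaches the final bound.
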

\begin{proof}
We observe the number of messages due to \maintainer and \random by invocation of \routing. 
We observe the two subroutines \maintainer and \random separately.
\begin{enumerate}
\item In \maintainer each round every mature starts three routing requests for itself and three routing requests on behalf of each fresh node connected to it.
Since there are at most $2\delta$ fresh nodes connected to a mature node
\emph{w.h.p.}, a given mature node starts $O(\log n)$ routing requests.
\item In \random each round every mature starts $O(\log n)$ tokens per round. Each token corresponds to one routing request.
\end{enumerate}
Using Lemma~\ref{lemma:routing_correctness} that each routing takes $O(\log n)$ rounds and Lemma~\ref{lemma:interval_congestion2} that for each routing request \routing has a congestion of $O(\log n)$ per round, \maintainer and \random together have congestion $O(\log^3 n)$ per round.

Now we observe the remaining operations performed each round.
\begin{enumerate}
    \item Recall that each swarm is of size $O(\log n)$ w.h.p. 
    Thus, during the introduction step in \maintainer each mature node introduces $O(\log n)$ nodes to their $O(\log n)$ neighbors. 
    Resulting in a congestion of $O(\log^2 n)$ additional messages per node. 
    \item In \random each node, w.h.p., receives $O(\log n)$ tokens through the sampling algorithm and forwards them to fresh nodes. Additionally, each fresh node sends out $O(\log n)$ advertisements.
    Thus, altogether each node exchanges $O(\log n)$ messages.
\end{enumerate}
\end{proof}


Theorem 1 now follows from lemmas \ref{lemma:correctness_lds}, \ref{lemma:random-overlay} and \ref{lemma:congestion}.

\section{Future Work \& Conclusion}
We presented an algorithm that maintains a structured overlay in presence of a $(2, O(\log n))$-late adversary. We permit $\alpha n$ deletions/additions over the course of $O(\log n)$ rounds. Note that this is exponentially higher than in \cite{AugustineS18} and \cite{DreesGS16}. 
However, both their algorithms are \emph{not} possible if the adversary has more recent knowledge of topology. This suggests a strong connection between an adversaries lateness with regard to the topology and permitted churn. For future work, one could consider finding an algorithm that tolerates
a $(1, O(\log n))$-late adversary. Also one could consider a hybrid model where the adversary has 
an almost up-to-date information about some nodes but is more outdated with regard to others. 

Further, we did not consider any kind of byzantine behavior.
However, the approaches used by Fiat et. al. \cite{FiatSY05} could perhaps also be used with our overlay.
Given, the overlay can handle byzantine nodes, further overlay problem, i.e. distributed agreement in the $(a,b)$-late setting, could also promising directions for future work.

\bibliographystyle{plain}
\bibliography{references}

\end{document}